\crefname{hypothesis}{Hypothesis}{Hypotheses}
\crefname{fact}{Fact}{Facts}
\title{Linear-Quadratic Discrete-Time Dynamic Games with Unknown Dynamics}
\author{Shengyuan Huang\thanks{Academy of Mathematics and Systems Science, Chinese Academy of Sciences, Beijing 100190  China; University of Chinese Academy of Sciences, Beijing 100190, China (\email{huangshengyuan22@mails.ucas.ac.cn}).}
\and Xiaoguang Yang\thanks{Academy of Mathematics and Systems Science, Chinese Academy of Sciences, Beijing 100190  China; University of Chinese Academy of Sciences, Beijing 100190, China (\email{xgyang@iss.ac.cn}).}
\and Zhigang Cao\thanks{School of Economics and Management, Beijing Jiaotong University, Beijing, 100044, China 
  (\email{zgcao@bjtu.edu.cn}).} 
\and Wenjun Mei\thanks{Corresponding author. Department of Mechanics and Engineering
Science, Peking University, Beijing 100871, China
  (\email{mei@pku.edu.cn}).}}
\begin{document}

\maketitle

\begin{abstract}
Considering linear-quadratic discrete-time games with unknown input/output/state (i/o/s) dynamics and state, we provide necessary and sufficient conditions for the existence and uniqueness of feedback Nash equilibria (FNE) in the finite-horizon game, based entirely on offline input/output data. We prove that the finite-horizon unknown-dynamics game and its corresponding known-dynamics game have the same FNEs, and provide detailed relationships between their respective FNE matrices. To simplify the computation of FNEs, we provide an invertibility condition and a corresponding algorithm that computes one FNE by solving a finite number of linear equation systems using  offline data. For the infinite-horizon unknown-dynamics game, limited offline data restricts players to computing optimal strategies only over a finite horizon. We prove that the finite-horizon strategy ``watching $T$ steps into the future and moving one step now,'' which is commonly used in classical optimal control, exhibits convergence in both the FNE matrices and the total costs in the infinite-horizon unknown-dynamics game, and further provide an analysis of the convergence rate of the total cost.  The corresponding algorithm for the infinite-horizon game is proposed and its efficacy is demonstrated through a non-scalar numerical example. 
\end{abstract}

\begin{keywords}
 linear-quadratic discrete-time games, unknown dynamics games, data-driven methods, behavioral system theory, Hankel matrix
\end{keywords}

\begin{MSCcodes}
91A50, 90C39
\end{MSCcodes}

\section{Introduction}
Dynamic games hold a significant position in game theory, as numerous application scenarios naturally involve competitions with multiple stages and dynamic interactions among players, such as economics \cite{res2020, aer2_2024, Econometrica_2025, aer1_2024}, management science \cite{ms1_2025,ms2_2025,or_2025}, control theory \cite{1998basarNoncooperativeGame,siam2024,automatica_game2025,TACgame2024}, robots \cite{natureMI_2024,scirobotics_2024,nature2019}, biological and social systems \cite{nc2024,nc2024_2,nc_2025}, and artificial intelligence \cite{nips2023,icml2021}.  Basar et al. \cite{1998basarNoncooperativeGame} provide a comprehensive review of information structures and equilibria in dynamic games. One of the most attractive equilibria is the feedback Nash equilibrium (FNE), where players act based on the continuously changing system state. However, the computation of FNEs in  infinite-horizon dynamic games is not straightforward, even for games with a linear system and quadratic cost functions (LQ games) \cite{siam2024, TACgame2024}. The coupled equations of FNEs typically require solving unknown high-dimensional multivariate matrices, numerous cross-product interactions among them, and high-order equations. These factors collectively contribute to the complexity \cite{2025finitestrategy}. Moreover, the LQ games may admit multiple FNEs \cite{TACgame2024,ifac2023}. Thus, approximate Nash equilibria and other iterative solutions are considered in both discrete-time setting \cite{TACgame2024,ifac2023} and continuous-time setting \cite{siopt2023,mor2024}. For example, Nortmann et al. \cite{ifac2023} considers the $\epsilon_{\alpha, \beta}$-Nash equilibrium as an approximate solution for the infinite-horizon discrete-time LQ game, which generalizes the $\epsilon_\alpha$-Nash equilibrium introduced in \cite{tac_approximate_2014}. 

Compared to the abundant results in continuous-time differential games \cite{liu_tac_diff_2023,siam_diff_2023,siam-diff_2021,auto_diff_2023}, the discrete-time scenario has received less attention \cite{TACgame2024,automatica_game2025}. However, the computation of FNEs for infinite-horizon discrete-time LQ games has been actively studied in recent years. Nortmann et al. \cite{TACgame2024} propose four iterative algorithms to obtain the feedback matrices of FNEs in infinite-horizon LQ discrete-time games, and discusses several locally asymptotically stable conditions for FNEs based on the vectorization of the coupled matrix difference equations associated with FNEs. For scalar scenarios of the game, that is, when the state, each player's input, and the cost matrices are all one-dimensional, Nortmann et al. \cite{automatica_game2025} provide a graphical representation of FNEs, showing that the intersection points between horizontal lines and the corresponding auxiliary functions correspond to FNE solutions. It also characterizes the number and properties of FNEs based on ex-ante parameters. Monti el al. \cite{siam2024} provide a necessary and sufficient condition for all FNEs of the game based on the stable solutions of the coupled generalized algebraic Riccati equations, supplementing the results in \cite{1998basarNoncooperativeGame}.
Despite these studies offering detailed characterizations of FNEs in terms of iterative solutions, stability conditions, or the number of equilibria, they are not adaptive when both the system dynamics and the state are unknown to the players.

For infinite-horizon discrete-time LQ games with unknown system dynamics, \cite{access2_2019,Cybernetics2019,access1_2019} consider several off-policy reinforcement learning (RL) algorithms to solve the game with unknown input/state (i/s) dynamics. These algorithms obtain the FNE through an iterative learning process or data collection following specific rules. 
However, many scenarios assume that the game begins as soon as each player provides her first input, and the earlier steps are more crucial than the later ones due to the discount factors, such as in the macroeconomic case discussed in \cite{kydland1976}. In such scenarios, players may have an incentive to modify their strategies during the iteration or data collection processes to achieve lower costs. 
The data-driven method for a discrete-time LQ game with unknown dynamics in \cite{TACgame2024} requires cooperation between players and assumes that they can schedule experiments and recursively collect finite-length data of the state and their own inputs. However, cooperative tests are not always suitable, and similarly, each player may adopt other strategies to increase her payoff during the scheduled experiments in the data collection period. Moreover, the studies on unknown-dynamics games focus only on unknown parameters of i/s systems, where the state is always public to the players. When the input/output/state (i/o/s) system is known, it can be equivalently represented by an i/s system. However, if the states (including the initial state) of the i/o/s system remain unknown to the players, the game becomes equivalent to one with the corresponding i/s system involving unknown system parameters and partially unknown states, which is more challenging than the classical setting with unknown i/s dynamics. 

Motivated by these problems, we introduce an $N$-player discrete-time LQ game model with an unknown i/o/s linear system and offline input/output data. Compared to the classical model in \cite{1998basarNoncooperativeGame,TACgame2024}, we provide the following characterizations:
(i) in addition to the parameters of the i/o/s system, the states are also unknown;
(ii) players are given an offline input/output dataset that only needs to satisfy  mild excitation or rank conditions, without requiring coordination or adherence to specific data collection or iteration rules; every step counts, as no trial or iteration phase is allowed once the game begins; 
(iii) players have different output reference trajectories and discount factors,  the former significantly increases competition, while the latter emphasizes the importance of the early stages of the game.
The main contributions of this paper are summarized as follows:
\begin{enumerate}
    \item We apply data-driven and behavioral systems methods to analyze the discrete -time linear-quadratic game with unknown input/output/state dynamics and offline input/output data. We establish two necessary and sufficient conditions for the existence and uniqueness of feedback Nash equilibria (FNEs) in the finite-horizon setting. The first condition proves that the finite-horizon game with unknown dynamics and its corresponding known-dynamics counterpart share the same FNEs, and provides explicit relationships between their respective FNE matrices. The second condition provides a set of coupled equations whose solution is directly related to the FNE of the game, and which are fully determined by the offline data and the parameters of the objective functions.
    \item To simplify the computation of FNEs in the finite-horizon game, we further propose a sufficient condition and the corresponding algorithm. Specifically, if the $T$ square matrices constructed from the offline data are invertible, then one FNE of the finite-horizon game can be obtained by solving $T$  sets of linear equations, where $T$ denotes the length of the game. 
    \item In the infinite-horizon unknown dynamics scenario, each player cannot accurately compute the FNE due to limited offline data. Under this setting, we prove that the finite-horizon strategy of ``watching a finite number of steps into the future and moving one step now,'' in which each player treats the infinite-horizon game with unknown dynamics as a finite-horizon game at each stage, provides an approximate solution in terms of total cost under the assumptions in \cite{2025finitestrategy}. When all players adopt this finite-horizon strategy, we prove that the total cost of each player converges to the cost under the related FNE of the infinite-horizon game with known dynamics as the players’ horizon length tends to infinity. Furthermore, we characterize the convergence rate of the total cost. We propose the corresponding algorithm and demonstrate its efficacy through a non-scalar numerical example.
\end{enumerate}

The remainder of the paper is organized as follows. Section \ref{Preliminaries} reviews the basic theory of discrete-time LQ games, LTI systems, and related fundamental lemmas. Sections \ref{finite-horizon Unknown-dynamics Game} and \ref{Infinite-horizon Unknown-dynamics Game} present our main results for the finite- and infinite-horizon games, respectively. Section \ref{Numerical Study and Further Analysis} provides a non-scalar numerical study, and Section \ref{conclusion} concludes the paper.

\section{Preliminaries}
\label{Preliminaries}
This section provides the necessary background on $N$-player, $T$-stage discrete-time dynamic game models with linear input/output/state (i/o/s) dynamics and quadratic objective functions, incorporating different discount factors and output reference trajectories for the players. It then reviews the basic theory of LTI behavioral systems and two related fundamental lemmas, which are crucial for constructing the data-driven method used to obtain the FNEs of the unknown-dynamics game.

\subsection{Linear-Quadratic Discrete-time Dynamic Game}

We consider an $N$-player $T$-stage discrete-time dynamic game characterized by the linear input/output/ state  system dynamics and quadratic objective functions below. Each player has her own discount factor and output reference trajectory. Let $\mathbb{N} = \{1, 2, \dots, N\}$ be the set of players, and let $\mathbb{T} = \{1, 2, \dots, T\}$ be the set of stages.
For the system (\ref{isosystem}), let
$$u_t=\text{col}(u_t^1,...,u_t^N)=\begin{bmatrix}
u_t^1\\
\vdots\\
u_t^N
\end{bmatrix}\in \mathbb{R}^{m \times 1},$$ where $u_t^i \in \mathbb{R}^{m_i}$ represents the input controlled by player $i$ at stage $t$, for $i \in \mathbb{N}$ and $t \in \mathbb{T}$, and $m = m_1 + \dots + m_N$.
The state and output at stage $t$ are denoted by $x_t \in \mathbb{R}^n$ and $y_t \in \mathbb{R}^p$, respectively.
The cost function of each player $i \in \mathbb{N}$ is given by (\ref{costfun}), where $l_t^i$, $t \in \mathbb{T}$, is the exogenous output reference trajectory of player $i$, $\delta_i \in (0,1]$ is the discount factor, the cost matrices $Q^i$ and $R^{ii}$ are positive semi-definite and positive definite, respectively.

\begin{align}
&\left\{\begin{aligned}
&x_{t+1} = Ax_t + Bu_t = Ax_t + \sum\limits_{i=1}^N B^iu^i_t\\
&y_t = Cx_{t} + Du_t=Cx_t + \sum\limits_{i=1}^N D^iu^i_t
\end{aligned}\right.
\label{isosystem}\\
&J^i(x_1,u)=\frac{1}{2} \sum\limits_{t=1}^{T} [(y_t-l^i_t)^{\top} Q^i (y_t-l^i_t) + \sum\limits_{j =1} ^N{(u^j_t)^{\top} R^{ij} u^j_t}] (\delta_i)^{t-1}
\label{costfun}
\end{align}

The information structure assumes that, at each stage $t \in \mathbb{T}$, every player can observe  the states $x_1, \dots, x_t$, the outputs $y_1, \dots, y_{t-1}$, and the inputs $u_1, \dots, u_{t-1}$. Moreover, all time-invariant parameters including $A$, $B$, $C$, $D$, $Q^i$, $R^{ij}$, $l^i_t$, and $\delta^i$ are known to all players for any $i, j \in \mathbb{N}$ and $t \in \mathbb{T}$.
Let $\gamma_t^i$ denote the strategy of player $i$ at stage $t$, defined as a map from the current state $x_t$ to the input $u_t^i$, that is, $\gamma_t^i(x_t) = u_t^i$. The definition of feedback Nash equilibrium (FNE) is given below. It can be readily verified that this  definition is equivalent to Definition 3.22 in \cite{1998basarNoncooperativeGame}, which is a subgame perfect Nash equilibrium.

\begin{definition}[Feedback Nash Equilibrium]
\label{de:FNE_finite}
 Consider a discrete-time T-stage game with (\ref{isosystem}, \ref{costfun}). 
 The strategy profile $\gamma^*=\{\gamma^{1*},$ $ \gamma^{2*},...,\gamma^{N*}\}$ is an FNE if and only if for any initial state $x_1 \in \mathbb R^n, \forall t\in \mathbb T, \forall i \in \mathbb N$, the following inequality  
 \begin{align*}
&J^i(x_1,\gamma_1,...,\gamma_{t-1},\gamma^{i*}_t,...,\gamma^{i*}_T,\gamma^{(-i)*}_t,...,\gamma^{(-i)*}_T) \\ 
\leq &J^i(x_1,\gamma_1,...,\gamma_{t-1},\gamma^{i}_t,...,\gamma^{i}_T,\gamma^{(-i)*}_t,...,\gamma^{(-i)*}_T), 
 \end{align*}
holds for any strategy $\gamma_1,...,\gamma_{t-1},\gamma^{i}_t,...,\gamma^{i}_T$, where $\gamma_t=\{\gamma^{1}_t,...,\gamma^{N}_t\}$, $\gamma^i=\{\gamma^{i}_1,...,\gamma^{i}_T\}$, $\gamma^{i*}=\{\gamma^{i*}_1,...,\gamma^{i*}_T\}$, $\gamma^{(-i)*}_t= \{\gamma^{1*}_t,$ $...,\gamma^{(i-1)*}_t,\gamma^{(i+1)*}_t,$ $...,\gamma^{N*}_t\}$. 
\end{definition}

\subsection{LTI System and Fundamental Lemmas}

A behavioral dynamical system is defined as a triple $\Sigma = (T, W, \mathcal{B})$, where $T$ is the time axis, $W$ is the signal space, and $\mathcal{B} \subset W^{T}$ is the set of behaviors. $\Sigma$ is said to be linear if $W$ is a linear space and $\mathcal{B}$ is a linear subspace of $W^{T}$. When $T = \mathbb{Z}$, $\Sigma$ is time-invariant if $\sigma^{t} \mathcal{B} = \mathcal{B}$, that is, $\sigma^{t} w \in \mathcal{B}$ for all $w \in \mathcal{B}$, where the shift operator $\sigma^{t}$ is defined by $(\sigma^{t} w)(t') = w(t + t')$.
Let $\mathcal{B}_t = \{ v \in W^{t} \mid \exists w \in \mathcal{B}, \text{ s.t. } v_k = w_k, 1 \leq k \leq t \}$ denote the restricted behavior of length $t$.
The set of all linear, time-invariant behavioral dynamical systems with finite-dimensional signal space $W = \mathbb{R}^q$ is denoted by $\mathcal{L}^q$, and is referred to as the class of LTI systems.

To simplify notation, we use the behavior $\mathcal{B}$ to represent the corresponding dynamical system $\Sigma = (\mathbb{Z}, \mathbb{R}^{q}, \mathcal{B})$, following \cite{DEEPC2019ECC}. Any such LTI system admits an equivalent input/output/state  representation
$$\mathcal{B}(A, B, C, D) = \{w = \text{col}(u, y) \mid \exists x \text{ such that } \sigma x = Ax + Bu,~ y = Cx + Du\},$$
as shown in \cite{1986willemspart1}. If an i/o/s  representation $\mathcal{B}(A, B, C, D)$ of $\mathcal{B}$ has the smallest possible state dimension among all such representations, it is called a minimal representation. In this case, the smallest positive integer $l$ such that $\text{col}(C, CA, \dots, CA^{l-1})$ has full column rank is denoted by the lag $\mathbf{l}(\mathcal{B})$. The input, output, and state dimensions of $\mathcal{B}$, denoted by $\mathbf{m}(\mathcal{B})$, $\mathbf{p}(\mathcal{B})$, and $\mathbf{n}(\mathcal{B})$, respectively, are defined as those of $\mathcal{B}(A, B, C, D)$. We then adopt the following notation to classify all LTI systems, as in \cite{nonpersistentexcitation2022}.
\[\partial \mathcal{L}^{q,n}_{m,l}=\{\mathcal{B} \in \mathcal{L}^q|  \mathbf{m}(\mathcal{B})=m, \mathbf{p}(\mathcal{B})=q-m, \mathbf{n}(\mathcal{B})=n, \mathbf{l}(\mathcal{B})=l\}.\]

\begin{align}
\tag{2.3}
\mathcal{H}_L(v)=\begin{bmatrix}
v_1 & v_2 & \dots & v_{s-L+1}\\
v_2 & v_3 & \dots & v_{s-L+2}\\
\vdots & \vdots & \dots & \vdots\\
v_L & v_{L+1} & \dots & v_s
\end{bmatrix}.
\label{Hankel}
\end{align}

For a time series $v = (v_1, \dots, v_s)$ with $v_i \in \mathbb{R}^r$ for $i = 1, \dots, s$, its Hankel matrix with $L$ block rows is defined as in (\ref{Hankel}). For multiple time series $v^j$, $j = 1, \dots, t$, possibly of different lengths, the corresponding Hankel matrix is given by $\mathcal{H}_L(v^1,v^2,...,v^{t})=[\mathcal{H}_L(v^1),\dots, \mathcal{H}_L(v^{t})]$, as introduced in \cite{nonpersistentexcitation2022}.

\begin{lemma}[\cite{initialcondition2008}, Lemma 1]
    Given a LTI system $\{Z_{\geq 0}, R^{m+p}, \mathcal{B}\}$ $\in \partial \mathcal{L}^{q,n}_{m,l}$ with an i/o/s representation $\mathcal{B}(A,B,C,D)$(not necessarily minimal). If the $T_{\text{ini}}$-length trajectory
    $ w_{1\leq t\leq T_{\text{ini}}} \in \mathcal{B}_{T_{\text{ini}}}$ is given, $T_{\text{ini}}\geq \mathbf{l}(A,C)$, then $x_1,$ $...,x_{T_{\text{ini}}+1}$ are uniquely determined.
\label{initialcondition2008}
\end{lemma}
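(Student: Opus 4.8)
The plan is to reduce the statement to the classical fact that an observable system can reconstruct its initial state from a sufficiently long input/output record, and then propagate that state forward by the dynamics. First I would unroll the state recursion $x_{t+1}=Ax_t+Bu_t$ into the closed form $x_t = A^{t-1}x_1 + \sum_{k=1}^{t-1}A^{t-1-k}Bu_k$ and substitute it into the output equation $y_t = Cx_t + Du_t$. This writes each $y_t$ as $CA^{t-1}x_1$ plus a term that depends only on the known inputs $u_1,\dots,u_t$.

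Next I would stack these identities for $t=1,\dots,T_{\text{ini}}$ into a single linear system of the form $\mathbf{y} = \mathcal{O}_{T_{\text{ini}}}\,x_1 + \mathcal{T}\,\mathbf{u}$, where $\mathbf{y}=\text{col}(y_1,\dots,y_{T_{\text{ini}}})$, $\mathbf{u}=\text{col}(u_1,\dots,u_{T_{\text{ini}}})$, the matrix $\mathcal{O}_{T_{\text{ini}}}=\text{col}(C,CA,\dots,CA^{T_{\text{ini}}-1})$ is the extended observability matrix, and $\mathcal{T}$ is the block lower-triangular Toeplitz matrix built from $D$ and the Markov parameters $CA^{j}B$. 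Since $w_{1\leq t\leq T_{\text{ini}}}\in\mathcal{B}_{T_{\text{ini}}}$, the given data $(\mathbf{u},\mathbf{y})$ satisfy this system, so any admissible $x_1$ must solve $\mathcal{O}_{T_{\text{ini}}}x_1 = \mathbf{y}-\mathcal{T}\mathbf{u}$, whose right-hand side is fully determined by the known trajectory.

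The crux is to show this equation has a unique solution in $x_1$, i.e., that $\mathcal{O}_{T_{\text{ini}}}$ has full column rank. By the definition of $\mathbf{l}(A,C)$ as the smallest index at which $\text{col}(C,CA,\dots,CA^{l-1})$ attains full column rank, and because appending further block rows $CA^{l},CA^{l+1},\dots$ can never decrease the column rank, full column rank persists for every $T_{\text{ini}}\geq \mathbf{l}(A,C)$. Hence $\mathcal{O}_{T_{\text{ini}}}$ is left-invertible, so $x_1$ is uniquely pinned down. Feeding this $x_1$ together with the known inputs back into $x_{t+1}=Ax_t+Bu_t$ then determines $x_2,\dots,x_{T_{\text{ini}}+1}$ uniquely by forward recursion, completing the argument.

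I expect the main obstacle to be the careful treatment of the ``not necessarily minimal'' clause: since the representation need not be controllable, I cannot invoke minimality and must instead rely solely on observability of the pair $(A,C)$ — equivalently, on the well-definedness of $\mathbf{l}(A,C)$ — to guarantee left-invertibility of $\mathcal{O}_{T_{\text{ini}}}$. I would also make the monotonicity-of-rank step explicit, as it is exactly what lets the single threshold $\mathbf{l}(A,C)$ work uniformly for every admissible horizon $T_{\text{ini}}$.
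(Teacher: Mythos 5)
Your proof is correct and is essentially the argument the paper relies on: the paper gives no proof of its own, deferring to Lemma 1 of the cited reference, whose proof is exactly this observability-matrix computation ($\mathbf{y}=\mathcal{O}_{T_{\text{ini}}}x_1+\mathcal{T}\mathbf{u}$, left-invertibility of $\mathcal{O}_{T_{\text{ini}}}$ for $T_{\text{ini}}\geq \mathbf{l}(A,C)$, then forward propagation). Your explicit handling of the ``not necessarily minimal'' clause --- noting that only observability of $(A,C)$, i.e.\ well-definedness of $\mathbf{l}(A,C)$, is needed, not controllability --- is precisely the modification the paper alludes to when it says the lemma is ``derived by modifying'' the cited result.
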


Lemma \ref{initialcondition2008} is derived by modifying Lemma 1 in \cite{initialcondition2008}, and its proof follows a similar argument. It implies that a historical input/output trajectory can uniquely determine the states of the system during that trajectory, and the result still holds even if the minimality of the i/o/s system is unknown.

\begin{lemma}[\cite{nonpersistentexcitation2022}, Corollary 19]
Let the offline data $\mathcal{W}_d = (w_d^1, w_d^2, \dots, w_d^v)$ be generated by an LTI system $\mathcal{B} \in \partial \mathcal{L}^{q,n}_{m,l}$, where $w_d^i \in \mathcal{B}_{T_i}$ is a trajectory with length $T_i$.
Then, the image of $\mathcal{H}_L(\mathcal{W}_d)$ is $\mathcal{B}_L$ for $L > \mathbf{l}(\mathcal{B})$ if and only if $\text{rank}(\mathcal{H}_L(\mathcal{W}_d)) = mL + n$. Furthermore, if $\{L > \mathbf{l}(\mathcal{B}) \mid \text{rank}(\mathcal{H}_L(\mathcal{W}_d)) = mL + n\} \neq \emptyset$ with its maximum element $L_m$, then it equals $\{\mathbf{l}(\mathcal{B}) + 1, \mathbf{l}(\mathcal{B}) + 2, \dots, L_m\}$.
\label{nonpersistentexcitation2022}
\end{lemma}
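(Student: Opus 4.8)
The plan is to reduce both assertions to two structural facts about the restricted behavior $\mathcal{B}_L$: first, that every column of $\mathcal{H}_L(\mathcal{W}_d)$ lies in $\mathcal{B}_L$, so that $\mathrm{image}(\mathcal{H}_L(\mathcal{W}_d)) \subseteq \mathcal{B}_L$ holds unconditionally; and second, that $\dim \mathcal{B}_L = mL + n$ whenever $L > \mathbf{l}(\mathcal{B})$. For the first fact, each column of $\mathcal{H}_L(w_d^i)$ is a length-$L$ window $(w_k, \dots, w_{k+L-1})$ of the recorded trajectory $w_d^i \in \mathcal{B}_{T_i}$; since $w_d^i$ is a prefix of some $w \in \mathcal{B}$ and $\mathcal{B}$ is shift-invariant, the window is a prefix of $\sigma^{k-1} w \in \mathcal{B}$ and hence belongs to $\mathcal{B}_L$. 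For the second fact I would use a minimal i/o/s representation $\mathcal{B}(A,B,C,D)$: the trajectory $w_{1:L} = \mathrm{col}(u_{1:L}, y_{1:L})$ is the image of the linear map $(x_1, u_{1:L}) \mapsto w_{1:L}$ of free inputs and initial state, whose kernel is governed by the observability matrix $\mathrm{col}(C, CA, \dots, CA^{L-1})$; by the definition of the lag this matrix has full column rank $n$ once $L \geq \mathbf{l}(\mathcal{B})$, so the map is injective and $\dim \mathcal{B}_L = mL + n$.

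Given these two facts, the equivalence is immediate by a dimension count. If $\mathrm{image}(\mathcal{H}_L(\mathcal{W}_d)) = \mathcal{B}_L$, then $\mathrm{rank}(\mathcal{H}_L(\mathcal{W}_d)) = \dim \mathcal{B}_L = mL+n$. Conversely, since $\mathrm{image}(\mathcal{H}_L(\mathcal{W}_d))$ is a subspace of $\mathcal{B}_L$ whose dimension equals $\mathrm{rank}(\mathcal{H}_L(\mathcal{W}_d)) = mL + n = \dim \mathcal{B}_L$, it must coincide with $\mathcal{B}_L$.

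For the interval claim the key is a down-closedness argument: I would show that if the rank condition holds at $L = L_m$, then it holds at every $L$ with $\mathbf{l}(\mathcal{B}) < L \le L_m$. Let $\pi$ denote the truncation map keeping the top $L$ block-rows of a length-$L_m$ signal. On the one hand, applying $\pi$ column-wise to $\mathcal{H}_{L_m}(\mathcal{W}_d)$ returns, within each series $w_d^i$, exactly the columns of $\mathcal{H}_L(w_d^i)$ indexed by window-start positions $k \le T_i - L_m + 1$, which are a subset of all depth-$L$ windows since $L \le L_m$; hence $\pi(\mathrm{image}(\mathcal{H}_{L_m}(\mathcal{W}_d))) \subseteq \mathrm{image}(\mathcal{H}_L(\mathcal{W}_d))$. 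On the other hand, because $\mathcal{B}_{L_m}$ and $\mathcal{B}_L$ are both restrictions of the same behavior $\mathcal{B}$, truncation satisfies $\pi(\mathcal{B}_{L_m}) = \mathcal{B}_L$. Combining with $\mathrm{image}(\mathcal{H}_{L_m}(\mathcal{W}_d)) = \mathcal{B}_{L_m}$ (from the equivalence already proved at $L_m$) yields $\mathcal{B}_L = \pi(\mathcal{B}_{L_m}) \subseteq \mathrm{image}(\mathcal{H}_L(\mathcal{W}_d)) \subseteq \mathcal{B}_L$, so equality holds and the rank condition is met at $L$. Since $L_m$ is the maximal admissible depth, the admissible set is exactly $\{\mathbf{l}(\mathcal{B})+1, \dots, L_m\}$.

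The step I expect to be the main obstacle is establishing the dimension formula $\dim \mathcal{B}_L = mL + n$ cleanly, since this is the only place where the structure of the LTI system, namely freeness of the input and the observability role encoded in the lag, genuinely enters; the remaining arguments are linear-algebraic bookkeeping. A secondary technical point requiring care is the column-matching in the truncation argument when $\mathcal{W}_d$ comprises several trajectories of different lengths $T_i$, where one must verify that the $\pi$-images of the depth-$L_m$ windows are indeed among the depth-$L$ windows of each $w_d^i$.
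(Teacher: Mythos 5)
The paper offers no proof of this lemma: it is imported verbatim as Corollary 19 of \cite{nonpersistentexcitation2022}, so there is no in-paper argument to compare yours against. Your blind proof is correct and reproduces the standard argument behind that corollary: the unconditional inclusion $\mathrm{image}(\mathcal{H}_L(\mathcal{W}_d)) \subseteq \mathcal{B}_L$ from linearity and shift-invariance, the dimension formula $\dim \mathcal{B}_L = mL + n$ for $L \geq \mathbf{l}(\mathcal{B})$ via the injective parametrization $(x_1, u_{1:L}) \mapsto w_{1:L}$ of a minimal (hence observable) i/o/s representation, the equivalence by a dimension count, and down-closedness of the rank condition via truncation, where your treatment of trajectories of unequal lengths $T_i$ is also sound, since the truncated depth-$L_m$ windows of each $w_d^i$ form a subset of its depth-$L$ windows (and trajectories with $L \leq T_i < L_m$ only add columns to $\mathcal{H}_L(\mathcal{W}_d)$, preserving the inclusion). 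The one step worth making explicit is surjectivity of the parametrization onto $\mathcal{B}_L$, which uses the forward time axis $\mathbb{Z}_{\geq 0}$ adopted in this paper (any pair $(x_1, u_{1:L})$ extends to a full trajectory by choosing arbitrary future inputs); injectivity, as you note, is exactly observability at depth $L \geq \mathbf{l}(\mathcal{B})$.
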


Lemma \ref{nonpersistentexcitation2022} characterizes which offline data can linearly represent all possible input and output trajectories within a specified time horizon \cite{nonpersistentexcitation2022}. To apply this lemma for prediction, suppose that $w_t = \mathrm{col}(u_t, y_t)$, where $u^d$ and $y^d$ denote the input and output components of $\mathcal{W}^d$, respectively. Let $L = T_{\mathrm{ini}} + T$, and assume that the initial $T_{\mathrm{ini}}$ steps $w_t$ for $t = -T_{\mathrm{ini}} + 1, \ldots, 0$ are given. Then, all possible future trajectories $w_t$ for $t = 1, 2, \ldots, T$ satisfy 
\[\{(u, y) \in \mathbb{R}^{Tm} \times \mathbb{R}^{Tp}\mid\exists g ~ s.t.~\begin{bmatrix}
 U_p \\
 Y_p \\
 U_f \\
 Y_f
 \end{bmatrix}g=
 \begin{bmatrix}
 u_{\text{ini}} \\
 y_{\text{ini}} \\
 u \\
 y
 \end{bmatrix}\},\] where
\[
\begin{bmatrix} 
U_{\mathrm{p}} \\ 
U_{\mathrm{f}} 
\end{bmatrix} = \mathcal{H}_{T_{\mathrm{ini}} + T}(u^{\mathrm{d}}), \quad
\begin{bmatrix} 
Y_{\mathrm{p}} \\ 
Y_{\mathrm{f}} 
\end{bmatrix} = \mathcal{H}_{T_{\mathrm{ini}} + T}(y^{\mathrm{d}}),
\]
with \( U_{\mathrm{p}} \) denoting the first \( T_{\mathrm{ini}} \) block rows of \(\mathcal{H}_{T_{\mathrm{ini}} + T}(u^{\mathrm{d}})\), and \( U_{\mathrm{f}} \) denoting the last \( T \) block rows of \(\mathcal{H}_{T_{\mathrm{ini}} + T}(u^{\mathrm{d}})\).  \( Y_{\mathrm{p}} \) and \( Y_{\mathrm{f}} \) are similarly defined \cite{DEEPC2019ECC}. 
For an i/o/s representation \(\mathcal{B}(A,B,C,D)\), when
$
T_{\mathrm{ini}} \geq \mathbf{l}(A,C),$ 
Lemma \ref{initialcondition2008} ensures that the initial state \( x_1 \) is uniquely determined by the given initial inputs and outputs \( u_{\mathrm{ini}}, y_{\mathrm{ini}} \). Then, from Lemma \ref{nonpersistentexcitation2022}, for any \( u \in \mathbb{R}^m \), since all possible future trajectories can be represented, there exists a vector \( g \) such that the relevant equations hold, and a unique output \( y \) satisfies these equations due to the uniqueness of \(x_1 \) \cite{DEEPC2019ECC}.

\section{Finite-horizon Unknown-dynamics Game}
\label{finite-horizon Unknown-dynamics Game}

In the unknown-dynamics game with the linear system (\ref{isosystem}), the matrices $A, B, C, D$ and the state $x_t$ (including the initial state $x_1$) are unknown to all players in $\mathbb{N}$. The corresponding behavioral system is denoted by $\mathcal{B}$, and the players' objective functions are given by (\ref{costfun}). The game spans $T$ stages, with $\mathbb{T} = {1, 2, \dots, T}$. At stage $t \in \mathbb{T}$, each player has access to the input-output history $\{u^j_k, y_k \mid k \leq t-1, j \in \mathbb{N}\}$, along with the initial data and a set of offline data trajectories. The initial data set consists of $u_{\text{ini}} = \text{col}(u_{-T_{\text{ini}}+1}, \dots, u_0)$ and $y_{\text{ini}} = \text{col}(y_{-T_{\text{ini}}+1}, \dots, y_0)$, where $T_{\text{ini}}$ is an exogenous positive integer.  The offline data set comprises $S$ trajectories of $u$ and $y$ with varying lengths, denoted by $w^s_d \in \mathcal{B}_{T_s}$ for $s = 1, \dots, S$, and collected as $\mathcal{W}_d = \{w^1_d, \dots, w^S_d\}$. The decision variable of player $i \in \mathbb{N}$ is $u^i_t \in \mathbb{R}^{m_i}$, and her strategy is a mapping from the initial data and past inputs to her decision, i.e., $u^i_t = \gamma(u_{\text{ini}}, y_{\text{ini}}, u_1, \dots, u_{t-1})$. For compactness, we define $U_t = \text{col}(u_{\text{ini}}, y_{\text{ini}}, u_1, \dots, u_t)$ for $t = 0, 1, \dots, T-1$. We now introduce a key assumption on the data.

\begin{assumption}
1. Initial data: $T_{\text{ini}} \geq \mathbf{l}(A,C)$.
2. Offline data: $\text{rank}(\mathcal{H}_L(\mathcal{W}_d))$ $=\mathbf{m}(\mathcal{B})L+\mathbf{n}(\mathcal{B})$, $L> T_{\text{ini}}+ T$.
\label{A1}
\end{assumption}

Next, we present our main results for the finite-horizon unknown-dynamics game. We begin by defining the feedback Nash equilibrium in this setting. We provide two conditions that ensure the existence and uniqueness of the FNE. These results include the affine structure of the equilibrium, detailed relations between the FNEs in the known- and unknown-dynamics games, and the coupled equations characterizing the FNE based on offline data. Finally, we provide a sufficient condition that simplifies the computation of the FNE, together with a corresponding algorithm.   

\begin{definition}[FNE in finite-horizon unknown-dynamics games]
Suppose that Assumption \ref{A1} holds for the T-stage unknown-dynamics game (\ref{isosystem}, \ref{costfun}).   The strategy profile $\gamma^*=\{\gamma^{1*},$ $ \gamma^{2*},...,\gamma^{N*}\}$ is an FNE if and only if for any given initial data $\text{col}(u_{\text{ini}},y_{\text{ini}}) \in \mathcal{B}_{T_{\text{ini}}}$,  $ \forall t\in \mathbb T, \forall i \in \mathbb N$, the following inequality  
 \begin{align*}
&J^i(u_{\text{ini}},y_{\text{ini}},\gamma_1,...,\gamma_{t-1},\gamma^{i*}_t,...,\gamma^{i*}_T,\gamma^{(-i)*}_t,...,\gamma^{(-i)*}_T) \\ 
\leq &J^i(u_{\text{ini}},y_{\text{ini}},\gamma_1,...,\gamma_{t-1},\gamma^{i}_t,...,\gamma^{i}_T,\gamma^{(-i)*}_t,...,\gamma^{(-i)*}_T), 
 \end{align*}
holds for any strategy $\gamma_1,...,\gamma_{t-1},\gamma^{i}_t,...,\gamma^{i}_T$, where $\gamma_t=\{\gamma^{1}_t,...,\gamma^{N}_t\}$, $\gamma^i=\{\gamma^{i}_1,...,\gamma^{i}_T\}$, $\gamma^{i*}=\{\gamma^{i*}_1,...,\gamma^{i*}_T\}$, $\gamma^{(-i)*}_t= \{\gamma^{1*}_t,$ $...,\gamma^{(i-1)*}_t,\gamma^{(i+1)*}_t,$ $...,\gamma^{N*}_t\}$, $J^i(u_{\text{ini}},y_{\text{ini}},\gamma)$ denotes the cost for player $i$ under the initial data $u_{\text{ini}},y_{\text{ini}}$ and the strategy profile $\gamma$.
\end{definition}

\begin{theorem}
     Consider an unknown-dynamics T-stage game (\ref{isosystem}, \ref{costfun}) with offline data $\mathcal{W}_d$ and initial data $u_{\text{ini}}, y_{\text{ini}}$ generated from an underlying dynamics (\ref{isosystem}). Suppose that Assumption \ref{A1} holds. Then, (i) the T-stage unknown-dynamics game and the corresponding known-dynamics game have the same set of FNEs, including the case when such a set is empty, (ii) for any FNE of the unknown-dynamics game, each player's strategy  takes an affine form: 
    $u^{i*}_t = \gamma^{i*}_t(U_{t-1}) = K^{i*}_t U_{t-1} + L^{i*}_t$, 
    and her utility takes a quadratic form: 
    $J^{i*} = \frac{1}{2} U_{0}^{\top} P_1^{i*} U_{0} + (S^{i*}_1)^{\top} U_{0} + w^{i*}_{1},
 t\in \mathbb T, i\in \mathbb N,$  
where $ P^{i*}_t, S^{i*}_t, w^{i*}_t, K^{i*}_t, L^{i*}_t, t \in \mathbb T, i\in \mathbb N$ are constant matrices determined by offline data $\mathcal{W}_d$ and $Q^i, R^{ij}, \delta_i, l^i_t$. 
    \label{theorem 1}
\end{theorem}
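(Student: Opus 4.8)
The plan is to reduce the entire statement to a backward-induction (dynamic-programming) argument carried out on the cost written purely in terms of the inputs and the initial data, and to use the two fundamental lemmas to certify that this reformulation is \emph{exact} and \emph{identical} to the one in the known-dynamics game. The linchpin is that under Assumption \ref{A1} the output at every stage is recoverable from offline data without any knowledge of $A,B,C,D$ or $x_t$, and that this data-driven reconstruction returns precisely the true output of the underlying system (\ref{isosystem}).

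First I would establish the two linear relationships that convert the game into an input-only problem. By the first part of Assumption \ref{A1} and Lemma \ref{initialcondition2008}, the initial state $x_1$ is uniquely determined by $(u_{\text{ini}},y_{\text{ini}})$; since this reconstruction solves a linear system with a full-column-rank observability matrix, it is linear, and propagating through $x_{t+1}=Ax_t+Bu_t$ yields a constant matrix $\Gamma_t$ with $x_t=\Gamma_t U_{t-1}$ for each $t\in\mathbb{T}$. By the second part of Assumption \ref{A1} and Lemma \ref{nonpersistentexcitation2022}, the image of the Hankel matrix equals $\mathcal{B}_{T_{\text{ini}}+T}$, so for any admissible input there exists $g$ solving the DeePC-type system, and the uniqueness of $x_1$ forces the predicted output $Y_f g$ to be independent of the choice of $g$ and to coincide with the true $y_t=Cx_t+Du_t$. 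Hence each $y_t$ is an explicit linear function of $U_t$ whose coefficients are read off from $\mathcal{W}_d$ alone. Substituting these expressions into (\ref{costfun}) writes each $J^i$ as a quadratic function of $\text{col}(u_{\text{ini}},y_{\text{ini}},u_1,\dots,u_T)$ with coefficient matrices determined by $\mathcal{W}_d$ and $Q^i,R^{ij},\delta_i,l^i_t$; crucially, this quadratic is \emph{the same object} in the known- and unknown-dynamics games, since both evaluate the identical true output.

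Next I would run the backward induction that yields parts (ii) and (iii) for an arbitrary FNE. Because the deviations $\gamma_1,\dots,\gamma_{t-1}$ are free, any FNE satisfies stage-wise optimality at every reachable $U_{t-1}$. The induction hypothesis is that from stage $t+1$ onward the equilibrium strategies are affine in $U_t$ and each realized continuation cost is quadratic, $V^i_{t+1}=\tfrac12 U_t^{\top}P^i_{t+1}U_t+(S^i_{t+1})^{\top}U_t+w^i_{t+1}$. At stage $t$, player $i$'s optimality condition is the first-order condition for minimizing her stage cost plus $V^i_{t+1}$ over $u^i_t$, with the opponents' actions supplied by their affine stage-$t$ strategies and with $y_t$ expanded through the data-driven map above. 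The conditions across $i\in\mathbb{N}$ form a coupled linear system in $(u^1_t,\dots,u^N_t)$ whose right-hand side is affine in $U_{t-1}$; solving it gives $u^{i*}_t=K^{i*}_tU_{t-1}+L^{i*}_t$ (uniquely so when the associated block matrix is nonsingular), and resubstitution produces a quadratic $V^i_t$ in $U_{t-1}$, closing the induction and defining $P^i_t,S^i_t,w^i_t,K^i_t,L^i_t$ recursively from the data and the cost parameters. The affine (rather than purely linear) form and the terms $S^i_t,w^i_t,L^i_t$ are the footprints of the reference trajectories $l^i_t$; taking $t=1$ with $U_0=\text{col}(u_{\text{ini}},y_{\text{ini}})$ delivers the claimed expression for $J^{i*}$.

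Finally, for part (i) I would argue that the defining system above is literally the same in both games. Because the stage-$t$ cost and continuation value depend on $U_{t-1}$ only through $x_t=\Gamma_t U_{t-1}$, the coupled first-order conditions in the unknown-dynamics game are the pullback through $\Gamma_t$ of those in the known-dynamics game; hence a solution exists in one exactly when it exists in the other, with strategies corresponding via $K^{i*}_t=F^{i*}_t\Gamma_t$ and $L^{i*}_t=g^{i*}_t$, where $F^{i*}_t,g^{i*}_t$ are the known-dynamics feedback gains, so the two FNE sets coincide, including when both are empty. The step I expect to be the main obstacle is making this correspondence tight at the level of \emph{strategy profiles} rather than equilibrium paths: since $U_{t-1}$ carries strictly more information than $x_t$, I must verify that every FNE of the unknown-dynamics game collapses to a function of $x_t$, i.e. that the richer information structure cannot manufacture spurious equilibria. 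This is precisely where the Markovian structure of the reformulated cost and the exactness of the data-driven prediction (Lemmas \ref{initialcondition2008}--\ref{nonpersistentexcitation2022}) must be combined carefully, and it is the part most likely to require an auxiliary nonsingularity condition ensuring that each per-stage quadratic game has a unique solution.
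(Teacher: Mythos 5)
Your reduction of the game to an input-only quadratic problem and your backward induction for part (ii) follow essentially the same route as the paper's proof: the paper also uses Lemma \ref{initialcondition2008} to pin down $x_1$ from $(u_{\text{ini}},y_{\text{ini}})$, eliminates $y$ via $g = M_t^{\dag}\operatorname{col}(u_{\text{ini}},y_{\text{ini}},u)$ and $y_t = G_t U_t$ built from the Hankel data, exploits strict convexity of each player's stage problem in $u^i_t$ (from $Q^i \succeq 0$, $R^{ii} \succ 0$), stacks the $N$ first-order conditions into a linear system $F_t u_t = G_t U_{t-1} + H_t$, and propagates quadratic value functions backward. That portion of your proposal is sound and matches the paper's part (a).

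The divergence — and the genuine gap — is in part (i). You propose to match the two games' coupled first-order conditions via the pullback $K^{i*}_t = F^{i*}_t\Gamma_t$, and you concede that making this correspondence tight at the level of strategy profiles ``is the part most likely to require an auxiliary nonsingularity condition ensuring that each per-stage quadratic game has a unique solution.'' Theorem \ref{theorem 1} carries no such condition: it asserts equality of the FNE \emph{sets} under Assumption \ref{A1} alone, including multi-equilibrium and empty cases, so a proof that presupposes per-stage uniqueness establishes a strictly weaker statement. The paper avoids the difficulty entirely by not matching equilibria at all: its part (b) shows that the two games are \emph{the same game} in the input variables. By Lemma \ref{nonpersistentexcitation2022}, $\operatorname{Image}(\mathcal{H}_{T_{\text{ini}}+T}(\mathcal{W}_d)) = \mathcal{B}_{T_{\text{ini}}+T}$, so the data-driven feasible set is exactly $\{(u,y)\mid \operatorname{col}(u_{\text{ini}},u,y_{\text{ini}},y)\in\mathcal{B}_{T_{\text{ini}}+T}\}$, which by the i/o/s representation and the uniqueness of the matched $x_1$ (Lemma \ref{initialcondition2008}) coincides with the known-dynamics constraint set; since the cost functionals agree on this common set and, once the initial data (equivalently the matched $x_1$) is fixed, the strategies in \emph{both} games are intrinsically maps from $(u_1,\dots,u_{t-1})$ to $u^i_t$, the FNE sets coincide by definition — no uniqueness, no equilibrium-by-equilibrium matching. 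Your worry that the richer information of $U_{t-1}$ might manufacture spurious equilibria is dissolved the same way: with $x_1$ pinned by the initial data, the known-dynamics state-feedback FNE is also intrinsically a history map, so there is no smaller strategy class to reconcile. The relations you posit, $K^{i*}_t = F^{i*}_t\Gamma_t$, then emerge as \emph{consequences} of the equivalence — this is exactly (\ref{FNE equivalence equa})--(\ref{strategy matrices equation}) — rather than its proof mechanism; note also that they can only hold as equalities of maps on $\mathcal{B}_{T_{\text{ini}}}\times\mathbb{R}^{m(t-1)}$, not of matrices, since the initial data range over a proper subspace of $\mathbb{R}^{T_{\text{ini}}(m+p)}$, which is why the paper tests the $(u_{\text{ini}},y_{\text{ini}})$-columns only against the basis matrix $\mathbf{B}(\mathcal{B}_{T_{\text{ini}}})$.
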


\begin{proof}
    See Appendix.\ref{Proof of Theorem 1}. 
\end{proof}

Theorem \ref{theorem 1} shows the equivalence between the FNEs of games with known and unknown dynamics. Note that this theorem assumes the initial data is given. According to Lemma \ref{initialcondition2008} and Assumption \ref{A1}, the underlying initial state $x_1$ is uniquely determined by $u_{\text{ini}}, y_{\text{ini}}$, and we refer to $x_1$ as the initial state matched with $(u_{\text{ini}}, y_{\text{ini}})$. Therefore, the FNE strategies here are intrinsically a set of maps from $(u_1, \ldots, u_{t-1})$ to $u^i_t$, for $i \in \mathbb{N}$ and $t \in \mathbb{T}$, in both known- and unknown-dynamics games. For any FNE here, on the one hand, it has the form $u^{i*}_t = \overline{K}^{i*}_t x_t + \overline{L}^{i*}_t$ \cite{1998basarNoncooperativeGame}. On the other hand, it has the form $u^{i*}_t = K^{i*}_t U_{t-1} + L^{i*}_t$ by Theorem \ref{theorem 1}. Thus, the equivalence of the FNEs in Theorem \ref{theorem 1} implies a functional relationship involving $u_1, \ldots, u_{t-1}$, as represented by the following equation. 
\begin{equation}
\begin{aligned}
    & \overline K^{i*}_t x_t+ \overline L^{i*}_t =\overline K^{i*}_t(A^{t-1}x_1 + A^{t-2}Bu_1 +...+ ABu_{t-2} + Bu_{t-1}) + \overline L^{i*}_t\\
    =&K^{i*}_t U_{t-1}+ L^{i*}_t= K^{i*}_t \text{col}(u_{\text{ini}},y_{\text{ini}},u_1,...,u_{t-1})+ L^{i*}_t, \forall u_1,...,u_{t-1} \in\mathbb R^m. 
    \end{aligned}
    \label{FNE equivalence equa}
    \end{equation} 
    
To further clarify this equivalence, we first introduce some necessary notation.

{\bf Notation}  
For $x=\text{col}(x_1,...,x_s)=(x_1^{\top},...,x_s^{\top})^{\top},$ where $x_i \in \mathbb{R}^{m_i \times 1}$ and $A\in \mathbb{R}^{g \times m},$ with $m=m_1+...+m_s$, $A_{:,x_i} \in \mathbb{R}^{g \times m_i}$ denotes the block from the $(m_1+...+m_{i-1}+1)$-th column to the $(m_1+...+m_{i})$-th column of $A$. By definition, we have $$ Ax=[A_{:,x_1},...,A_{:,x_s}]\text{col}(x_1,...,x_s)=A_{:,x_1} x_1+...+ A_{:,x_s}x_s.$$
For $y=(y_1,...,y_s),$ where $ y_i \in \mathbb{R}^{1 \times m_i}$ and $B\in \mathbb{R}^{m \times g},$ with 
$ m=m_1+...+m_s$, $B_{y_i,:}\in \mathbb{R}^{m_i \times g}$ denotes the block from the $(m_1+...+m_{i-1}+1)$-th row to the $(m_1+...+m_{i})$-th row of $B$. 
Similarly, we have $$ yB=(y_1,...y_s)\text{col}[B_{y_1,:},...,B_{y_s,:}]=y_1 B_{y_1,:} +...+ y_s B_{y_s,:}.$$ 
Using the notations above, we have
$$K^{i*}_t=[(K^{i*}_t)_{:,(u_{\text{ini}},y_{\text{ini}})}, (K^{i*}_t)_{:,u_{1}},...,(K^{i*}_t)_{:,u_{t-1}}].$$ 
Given the initial data $u_{\text{ini}}, y_{\text{ini}}$ and the underlying state $x_1$ matched to it, since (\ref{FNE equivalence equa}) holds for any $u_1, \ldots, u_{t-1} \in \mathbb{R}^m$, we obtain 
\begin{equation}
\begin{aligned}
    &\overline K^{i*}_t A^{t-k-1}B = (K^{i*}_t)_{:,u_{k}},  k=1,...,t-1,\\
    &\overline K^{i*}_t A^{t-1}x_1 +\overline L^{i*}_t= (K^{i*}_t)_{:,u_{\text{ini}}} u_{\text{ini}}+(K^{i*}_t)_{:,y_{\text{ini}}} y_{\text{ini}}+L^{i*}_t,t\in \mathbb T, i\in \mathbb N.
    \end{aligned}
    \label{strategy matrices equation}
    \end{equation}

\begin{remark}
If the $T$-stage known-dynamics game (\ref{isosystem}, \ref{costfun}) has a unique FNE $\{\overline K^{i*}_t x_t + \overline L^{i*}_t \mid i \in \mathbb{N}, t \in \mathbb{T}\}$, then (\ref{strategy matrices equation}) can be expressed in a more refined form. This is because, for an FNE of the corresponding unknown-dynamics game $\{ K^{i*}_t U_{t-1} + L^{i*}_t \mid i \in \mathbb{N}, t \in \mathbb{T}\}$ and for any initial data $(u_{\text{ini}}, y_{\text{ini}}) \in \mathcal{B}_{T_{\text{ini}}}$, (\ref{strategy matrices equation}) always holds due to the uniqueness of the FNE of the known-dynamics game. Note that $x_1=0$ is matched with $u_{\text{ini}}, y_{\text{ini}} = \mathbf{0}$, then we have  
\begin{equation}
\begin{aligned}
    &\overline K^{i*}_t A^{t-k-1}B = (K^{i*}_t)_{:,u_{k}},  k=1,...,t-1,\\
    &\overline K^{i*}_t A^{t-1}x_1 = (K^{i*}_t)_{:,(u_{\text{ini}},y_{\text{ini}})} \text{col}(u_{\text{ini}},y_{\text{ini}})=(K^{i*}_t)_{:,u_{\text{ini}}} u_{\text{ini}}+(K^{i*}_t)_{:,y_{\text{ini}}} y_{\text{ini}}, \\
    &\overline L^{i*}_t = L^{i*}_t, t\in \mathbb T, i\in \mathbb N.
    \end{aligned}
    \label{strategy matrices equation2}
    \end{equation} 
 This implies that for the unique FNE of the known-dynamics game, there exists an FNE of the  unknown-dynamics game satisfying (\ref{strategy matrices equation2}) for every possible initial data $\operatorname{col}(u_{\text{ini}}, y_{\text{ini}}) \in \mathcal{B}_{T{\text{ini}}}$. 
 \label{remark}
\end{remark}

\begin{remark}
For the $T$-stage unknown-dynamics game (\ref{isosystem}, \ref{costfun}), given two distinct sets of strategy matrices
$\{\hat K^{i}_t, \hat L^{i}_t \mid t\in \mathbb{T}, i\in \mathbb{N}\}$ and
$\{K^{i}_t, L^{i}_t \mid t\in \mathbb{T}, i\in \mathbb{N}\}$,
the strategies they form for player $i$ are considered identical if the following condition holds for any $ (u_{\text{ini}},y_{\text{ini}}) \in \mathcal{B}_{T_{\text{ini}}}, (u_1,...,u_{t-1})\in \mathbb{R}^{m(t-1)},$  
\[\hat K^i_t U_{t-1} + \hat L^i_t = K^{i}_t U_{t-1} + L^{i}_t,~ t\in \mathbb T. \]
\end{remark}

\begin{remark}
Theorem \ref{theorem 1} implies that if player $i$ does not know the system dynamics but player $j$ does, player $i$ can perform just as well as if she knew the dynamics, provided she has sufficient offline data. Therefore,  information about the system dynamics does not provide an advantage to player $j$.
\end{remark}

Based on structured $N$-person data-enabled Bellman equations constructed from offline data, we derive the following coupled equations for all $t \in \mathbb{T}$ and $i \in \mathbb{N}$, which provide necessary and sufficient conditions for the existence and uniqueness of the FNE in the game. 
\begin{align}
    &\begin{aligned}
 &[(G_t)_{:,u^i_t}]^{\top} Q^i G_t(\mathbf{K}_t)_{:, (u_{\text{ini}},y_{\text{ini}})}\textbf B(\mathcal{B}_{ini}) 
 + \delta_i(P^{i}_{t+1})_{u^i_t,:}(\mathbf{K}_t)_{:, (u_{\text{ini}},y_{\text{ini}})}\textbf B(\mathcal{B}_{ini}) \\
 &+ R^{ii}(K^i_t)_{:, (u_{\text{ini}},y_{\text{ini}})}\textbf B(\mathcal{B}_{ini}) =0,
\end{aligned}
\tag{1-1a}
\label{eq1-1a}\\
    &\begin{aligned}
 &[(G_t)_{:,u^i_t}]^{\top} Q^i G_t (\mathbf{K}_t)_{:,u_j}
 + R^{ii}(K_t^{i})_{:,u_j} 
 + \delta_i (P^{*}_{t+1})_{u^i_t,:}(\mathbf{K}_t)_{:,u_j} =0,\\
 &j=1,...,t-1,
\end{aligned}
\tag{1-1b}
\label{eq1-1b}\\
&\begin{aligned}
&[(G_t)_{:,u^i_t}]^{\top} Q^i (G_t \mathbf{L}_t - l^i_t) + R^{ii} L_t^i + \delta_i (P^i_{t+1})_{u^i_t,:}\mathbf{L}_t + \delta_i (S^i_{t+1})_{u^i_t,:} = 0, 
\end{aligned}
\tag{1-2}
\label{eq1-2}\\
&\begin{aligned}
&P^i_t = (G_t \mathbf{K}_t)^{\top} Q^i (G_t \mathbf{K}_t) + \sum\limits_{j\in N}(K^j_t)^{\top} R^{ij} (K^j_t) + \delta_i(\mathbf{K}_t)^{\top} P_{t+1}^i \mathbf{K}_t,\\
&~\text{with}~P_{T+1}^i =0, 
\end{aligned}
\tag{1-3}
\label{eq1-3}\\
&\begin{aligned}
S^i_t &= (G_t \mathbf{K}_t)^{\top} Q^i (G_t \mathbf{L}_t - l^i_t) + \sum\limits_{j\in N}(K^j_t)^{\top} R^{ij} (L^j_t) + \delta_i(\mathbf{K}_t)^{\top} P_{t+1}^i \mathbf{L}_t \\
&+ \delta_i\mathbf{K}_t^{\top} S_{t+1}^i,
~\text{with}~S_{T+1}^i =0, 
\end{aligned}
\tag{1-4}
\label{eq1-4}\\
&\begin{aligned}
w^i_t &= \frac{1}{2}(G_t \mathbf{L}_t - l^i_t)^{\top} Q^i (G_t \mathbf{L}_t - l^i_t) + \frac{1}{2}\sum\limits_{j\in N}(L^j_t)^{\top} R^{ij} (L^j_t)  \\
&+ \frac{1}{2}\delta_i(\mathbf{L}_t)^{\top} P_{t+1}^i \mathbf{L}_t 
+ \delta_i\mathbf{L}_t^{\top} S_{t+1}^i + \delta_iw_{t+1}^i, ~\text{with}~ w_{T+1}^i =0.
\end{aligned}
\tag{1-5}
\label{eq1-5}
\end{align}

\begin{theorem}
\label{theorem 2} Consider an unknown-dynamics T-stage game (\ref{isosystem}, \ref{costfun}) with offline data $\mathcal{W}_d$ and initial data $u_{\text{ini}}, y_{\text{ini}}$ generated from an underlying dynamics (\ref{isosystem}).
Suppose that Assumption \ref{A1} holds. Then, (I) The game has at least one FNE if and only if   (\ref{eq1-1a}, \ref{eq1-1b}, \ref{eq1-2} - \ref{eq1-5}) have at least one solution.
Specifically, (a) for an FNE $\{u^{i*}_t=K^{i*}_t  U_{t-1} + L^{i*}_t\mid t\in \mathbb T, i\in \mathbb N\}$, there must exist $\{P^{i*}_t, S^{i*}_t, w^{i*}_t\mid t\in \mathbb T, i\in \mathbb N\}$ such that $\{P^{i*}_t, S^{i*}_t, w^{i*}_t, K^{i*}_t, L^{i*}_t\mid t\in \mathbb T, i\in \mathbb N\}$ is a solution of equations (\ref{eq1-1a}, \ref{eq1-1b}, \ref{eq1-2} - \ref{eq1-5}); (b) for a solution of (\ref{eq1-1a}, \ref{eq1-1b}, \ref{eq1-2} - \ref{eq1-5}) $\{P^{i*}_t, S^{i*}_t, w^{i*}_t, K^{i*}_t, L^{i*}_t\mid t\in \mathbb T, i\in \mathbb N\}$,  $\{u^{i*}_t=K^{i*}_t  U_{t-1} + L^{i*}_t\mid t\in \mathbb T, i\in \mathbb N\}$ is an FNE of the game. In both cases, the cost of each player $i$ takes the quadratic form $$J^{i*} = \frac{1}{2} U_{0}^{\top} P_1^{i*} U_{0} + (S^{i*}_1)^{\top} U_{0} + w^{i*}_{1}, i\in \mathbb N.$$
(II) The game (\ref{isosystem}, \ref{costfun}) has a unique FNE if and only if   (\ref{eq1-1a}, \ref{eq1-1b}, \ref{eq1-2} - \ref{eq1-5}) have at least one solution,
and all solutions of (\ref{eq1-1a}, \ref{eq1-1b}, \ref{eq1-2} - \ref{eq1-5}) satisfy the following conditions: 
\begin{enumerate}
    \item They have the same constant parts $L^{i*}_t$  for any $t\in \mathbb T$ and $i\in \mathbb N.$
    \item They have the same linear parts in the corresponding columns of $u_1,u_2,$ $...,u_{t-1}$: $(K^{i*}_t)_{:,u_1},(K^{i*}_t)_{:,u_2},...,(K^{i*}_t)_{:,u_{t-1}}$, for any $t\in \mathbb T$ and $i\in \mathbb N.$ 
    \item The following condition always holds for any $\text{col}(\tilde{u}, \tilde{y}) \in \mathcal{B}_{T_{\text{ini}}}$
    \[\begin{bmatrix}(K^{i*}_t)_{:,u_{\text{ini}}}, (K^{i*}_t)_{:,y_{\text{ini}}} \end{bmatrix} \begin{bmatrix}\tilde{u}\\ \tilde{y} \end{bmatrix}=\begin{bmatrix}(\hat K^{i*}_t)_{:,u_{\text{ini}}}, (\hat  K^{i*}_t)_{:,y_{\text{ini}}} \end{bmatrix} \begin{bmatrix}\tilde{u}\\ \tilde{y} \end{bmatrix}\]
    where
    $\{P^{i*}_t, S^{i*}_t, w^{i*}_t, K^{i*}_t, L^{i*}_t\mid t\in \mathbb T, i\in \mathbb N\}$, 
    $\{\hat{P}^{i*}_t, \hat{S}^{i*}_t, \hat{w}^{i*}_t, \hat{K}^{i*}_t, \hat{L}^{i*}_t\mid t\in \mathbb T, i\in \mathbb N\} $ are any two distinct solutions of (\ref{eq1-1a}, \ref{eq1-1b}, \ref{eq1-2} - \ref{eq1-5}).
\end{enumerate}
\end{theorem}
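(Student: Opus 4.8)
The plan is to prove both parts by backward induction on the stage index, treating each player's best-response problem, with the rivals' strategies frozen at equilibrium, as a finite-horizon convex optimal-control problem in the data-driven setting. The starting observation is that, by the fundamental lemma (Lemma~\ref{nonpersistentexcitation2022}) together with the uniqueness of the matched state (Lemma~\ref{initialcondition2008}) under Assumption~\ref{A1}, the output $y_t$ is a well-defined, causal linear function of the augmented history $U_t=\text{col}(u_{\text{ini}},y_{\text{ini}},u_1,\dots,u_t)$, so $y_t=G_tU_t$ for a data-constructed matrix $G_t$ whose block $(G_t)_{:,u^i_t}$ plays the role of the unknown feedthrough of $u^i_t$. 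When every player uses an affine rule $u^j_t=K^j_tU_{t-1}+L^j_t$ the closed loop reads $U_t=\mathbf K_tU_{t-1}+\mathbf L_t$, where $\mathbf K_t$ stacks the identity on the $U_{t-1}$-block together with the $K^j_t$ and $\mathbf L_t$ stacks zero together with the $L^j_t$; hence $y_t=G_t\mathbf K_tU_{t-1}+G_t\mathbf L_t$. I would posit for player $i$ the quadratic cost-to-go $V^i_t(U_{t-1})=\frac{1}{2} U_{t-1}^\top P^i_tU_{t-1}+(S^i_t)^\top U_{t-1}+w^i_t$ and show that (\ref{eq1-1a}, \ref{eq1-1b}, \ref{eq1-2}--\ref{eq1-5}) are exactly its stationarity-and-propagation conditions.

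For Part~(I)(a) I would initialize $P^i_{T+1}=S^i_{T+1}=w^i_{T+1}=0$ and induct backward. By Theorem~\ref{theorem 1} any FNE is affine, so freezing the rivals at equilibrium and differentiating player $i$'s stage cost plus $\delta_iV^i_{t+1}(U_t)$ in $u^i_t$ gives the first-order condition $[(G_t)_{:,u^i_t}]^\top Q^i(y_t-l^i_t)+R^{ii}u^i_t+\delta_i[(P^i_{t+1})_{u^i_t,:}U_t+(S^i_{t+1})_{u^i_t}]=0$. Substituting the closed-loop forms and demanding the identity for all admissible $U_{t-1}$ is the crux: I would split $U_{t-1}$ into the free block $(u_1,\dots,u_{t-1})$ and the constrained block $(u_{\text{ini}},y_{\text{ini}})\in\mathcal B_{T_{\text{ini}}}$, so that matching the coefficient of each free $u_j$ gives (\ref{eq1-1b}), matching the constrained block after right-multiplication by a basis $\mathbf B(\mathcal B_{ini})$ of $\mathcal B_{T_{\text{ini}}}$ gives (\ref{eq1-1a}), and matching the constant gives (\ref{eq1-2}). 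Re-substituting $u^{i*}_t$ into the Bellman right-hand side and collecting the quadratic, linear and constant parts in $U_{t-1}$ then yields (\ref{eq1-3}), (\ref{eq1-4}), (\ref{eq1-5}), closing the induction and giving $J^{i*}=\frac{1}{2}U_0^\top P^{i*}_1U_0+(S^{i*}_1)^\top U_0+w^{i*}_1$. For Part~(I)(b) I would run the induction forward from a solution of the equations: define the affine strategies and the candidates $V^i_t$, and verify that $u^{i*}_t$ is a genuine minimizer. Since $R^{ii}\succ0$ and $Q^i\succeq0$ each player's subproblem is strictly convex in her own input, so the first-order conditions are sufficient for the unique best response and (\ref{eq1-3}--\ref{eq1-5}) certify $V^i_t$ as the exact cost-to-go; backward induction then confirms the FNE inequality at every stage, i.e.\ subgame perfection.

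For Part~(II) I would invoke the correspondence from Part~(I): every FNE yields a solution and every solution yields an FNE, so the game has a unique FNE exactly when the equations are solvable and all solutions induce the \emph{same strategy maps}. Two matrix sets induce the same map for player~$i$ iff $K^i_tU_{t-1}+L^i_t=\hat K^i_tU_{t-1}+\hat L^i_t$ for every admissible $U_{t-1}$; decomposing $U_{t-1}$ into its constant part, its free blocks $u_1,\dots,u_{t-1}$, and its constrained block $(u_{\text{ini}},y_{\text{ini}})\in\mathcal B_{T_{\text{ini}}}$ shows this is equivalent to condition~1 (equal $L^i_t$), condition~2 (equal $(K^i_t)_{:,u_j}$), and condition~3 (the $(u_{\text{ini}},y_{\text{ini}})$-columns agreeing on $\mathcal B_{T_{\text{ini}}}$). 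Together with solvability this is precisely the stated characterization.

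The main obstacle I anticipate is the coefficient-matching under the mixed free/constrained structure of $U_{t-1}$: the block $(u_{\text{ini}},y_{\text{ini}})$ is not free but confined to $\mathcal B_{T_{\text{ini}}}$, which is exactly why (\ref{eq1-1a}) carries the basis factor $\mathbf B(\mathcal B_{ini})$ while (\ref{eq1-1b}) does not, and why uniqueness splits into three conditions rather than a single matrix equality. Executing this cleanly requires justifying from the data alone that $G_t$ is well defined and causal, and that restricting the stationarity identity to $\mathcal B_{T_{\text{ini}}}$ drops no necessary condition while introducing no spurious one.
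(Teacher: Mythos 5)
Your proposal is correct and follows essentially the same route as the paper's proof: backward dynamic programming with a quadratic value-function ansatz $V^i_t$, strict convexity in $u^i_t$ (from $R^{ii}\succ 0$, $Q^i\succeq 0$) making the first-order condition both necessary and sufficient, coefficient matching in which the $u_j$ blocks of $U_{t-1}$ are free (yielding (\ref{eq1-1b})) while the $(u_{\text{ini}},y_{\text{ini}})$ block is constrained to $\mathcal{B}_{T_{\text{ini}}}$ and hence carries the basis factor $\mathbf{B}(\mathcal{B}_{T_{\text{ini}}})$ (yielding (\ref{eq1-1a})), and, for Part (II), reduction of FNE uniqueness to equality of the induced strategy maps, which decomposes into exactly the three stated conditions. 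The only cosmetic difference is one of ordering: the paper first constructs $P^{i*}_t, S^{i*}_t, w^{i*}_t$ directly from the recursions (\ref{eq1-3}--\ref{eq1-5}) and then verifies the stationarity identities, whereas you obtain (\ref{eq1-3}--\ref{eq1-5}) by collecting the quadratic, linear, and constant Bellman coefficients---the same argument.
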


\begin{proof}
    See Appendix.\ref{Proof of Theorem 2}. 
\end{proof}

The coupled iterative equations (\ref{eq1-1a}, \ref{eq1-1b}, \ref{eq1-2} – \ref{eq1-5}) exhibit a clear structure, as all components involving offline data depend only on $G_1, \dots, G_T$ (see the notation below). Moreover, (\ref{eq1-3} – \ref{eq1-5}) correspond to simple iterative updates of $P^{i*}_t$, $S^{i*}_t$, and $w^{i*}_t$. To clarify these equations, we first introduce the necessary notations and specify the dimensions of the relevant variables. 

{\bf Notation}  
The data matrix is $\text{col}(U_p,$ $Y_P,U_f,Y_f)$, where $\begin{bmatrix} U_{\text{p}}\\ U_{\text{f}} \end{bmatrix} :=\mathcal{H}_{T_{\text{ini}}+N}(u^{\text{d}}), $ $\begin{bmatrix}  Y_{\text{p}}\\ Y_{\text{f}} \end{bmatrix}:=\mathcal{H}_{T_{\text{ini}}+N}(y^{\text{d}})$. $U_p$ is the first $T_{\text{ini}}$ block rows in $\mathcal{H}_{T_{\text{ini}}+N}(u^{\text{d}})$, $U_f$ is the last $T$ block rows in $\mathcal{H}_{T_{\text{ini}}+N}(u^{\text{d}})$, where one block row in $U_p, U_{f}$ consists of $m$ rows, $u_t \in \mathbb R^{m}$ (similarly, $Y_p, Y_f$ are defined) \cite{DEEPC2019ECC}. $U_{ft}$ is the $t$-th block row in $U_{f}$, $(U_f)_{1:t}$ represents the first $t$ block rows in $U_{f}$. $Y_{ft}$ is the $t$-th block row in $Y_{f}$ (one block row in $Y_{f}$ consists of $p$ rows, $y_t \in \mathbb R^{p}$), $(Y_f)_{1:t}$ represents the first $t$ block rows in $Y_{f}$. $M_t = \begin{bmatrix}
U_p \\
Y_P \\
(U_f)_{1:t} 
\end{bmatrix}$, $\dag$ denotes the pseudo-inverse. $\mathbf{K}_t = \begin{bmatrix}
I \\
K_t
\end{bmatrix}, \mathbf{L}_t = \begin{bmatrix}
0 \\
L_t
\end{bmatrix},$ where $K_t= \text{col}(K_t^1,...,K_t^N), L_t= \text{col}(L_t^1,...,L_t^N),$ and $I$ is an identity matrix. It can be verified that $ U_{t} =  \begin{bmatrix}
U_{t-1}\\
u_t
\end{bmatrix}  = \begin{bmatrix}
U_{t-1}\\
K_t U_{t-1}
\end{bmatrix} + \mathbf{L}_t = \mathbf{K}_t U_{t-1} + \mathbf{L}_t$. We denote $$G_t=Y_{ft}M_t^{\dag}, t=1,...,T,$$
which is entirely determined by offline data. The matrix $\textbf B (\mathcal{B}_{T_{\text{ini}}})$ consists of a set of column vectors that form a basis for the linear space $\mathcal{B}_{T_{\text{ini}}}$, with all inputs 
$u$ placed at the top. 

{\bf Dimension} $u^i_t=K^i_t U_{t-1}+L^i_t$, $U_{t-1} \in \mathbb{R}^{[T_{\text{ini}}(m+p)+(t-1)m] \times 1}, u^i_t \in \mathbb{R}^{m_i \times 1}, K^i_t \in \mathbb{R}^{m_i \times [T_{\text{ini}}(m+p)+(t-1)m]}, L^i_t \in \mathbb{R}^{m_i \times 1}$. $P^i_{t} \in \mathbb{R}^{[T_{\text{ini}}(m+p)+(t-1)m] \times [T_{\text{ini}}(m+p)+(t-1)m]},$ $S^i_{t} \in \mathbb{R}^{[T_{\text{ini}}(m+p)+(t-1)m] \times 1},$ $w^i_{t} \in \mathbb{R}^{1 \times 1}$. $ \mathbf{K}_t \in \mathbb{R}^{[T_{\text{ini}}(m+p) + t  m] \times [T_{\text{ini}}(m+p)+(t-1)m]} $, $G_t \in \mathbb{R}^{p \times [T_{\text{ini}}(m+p)+t m]}$. Note that the number of columns in the feedback matrix $ K^i_t $ varies with the stage $ t $, which is in contrast to the known-dynamics game scenario, where the feedback matrix has a fixed number of $ n $ columns.  
\begin{align}
&[(G_t)_{:,u^i_t}]^{\top} Q^i G_t \mathbf{K}_t + R^{ii} K_t^i + \delta_i(P^i_{t+1})_{u^i_t,:}\mathbf{K}_t = 0, ~t\in \mathbb T, ~i\in \mathbb N.
\label{eq1-1}
\tag{1-1}
\end{align}

Theorem \ref{theorem 2} is not convenient for computation, as equation (\ref{eq1-1a}) takes the form $\hat A X \hat B = \hat C$, which makes computation more difficult. Here, we provide a new set of equations (\ref{eq1-1} – \ref{eq1-5}), derived as a special case of (\ref{eq1-1a}), and then combine it with (\ref{eq1-1b}) to obtain (\ref{eq1-1}), which takes the form $\hat A X = \hat b$. 
For any $t \in \mathbb{T}$, when $P_{t+1}$ is given, equation (\ref{eq1-1}) is linear in $K_t$. By combining the equations (\ref{eq1-1}) for all players at stage $t$ into a single equation, we obtain 
\begin{align}
\tilde{H}_t(P_{t+1}) K_{t} = \tilde{g}_t(P_{t+1}),
\label{eq2-1-1}
\end{align}
where $P_{t+1} = \text{col}(P^1_{t+1}, \dots, P^N_{t+1})$, $K_t = \text{col}(K^1_t, \dots, K^N_t)$.  
Similarly, for any $t \in \mathbb{T}$, when $P_{t+1}$ and $S_{t+1}$ are given, (\ref{eq1-2}) is linear in $L^i_t$. Interestingly, $L^i_t$ and $K^i_t$ share the same coefficient matrix  $\tilde{H}_t(P_{t+1})$. Thus, by combining all players' equations (\ref{eq1-2}) at stage $t$ into a single equation, we obtain  
\begin{equation}
\tilde{H}_t(P_{t+1}) L_{t} = \tilde{g}'_t(S_{t+1}),
\label{eq2-2-1}
\end{equation}
where $S_{t+1}= \text{col}(S^1_{t+1},...,S^N_{t+1})$, $L_{t}= \text{col}(L^1_{t},$$...,L^N_{t})$. The matrices $\tilde{H}_t(P_{t+1}) \in \mathbb{R}^{m \times m}$ and $\tilde{g}_t(P_{t+1}) \in \mathbb{R}^{m \times [T_{\text{ini}}(m+p) + (t-1)m]}$, $ \tilde{g}'_t(P_{t+1}) \in \mathbb{R}^{m\times 1}$ are given by the following equations. 
\begin{align*}
&\tilde{H}_t(P_{t+1}) = 
\begin{bmatrix}
(G_t^{\top} Q^1 G_t)_{u^1_t,u_t}\\
\vdots
\\
(G_t^{\top} Q^N G_t)_{u^N_t,u_t}
\end{bmatrix} + \begin{bmatrix}
\delta_1 (P^1_{t+1})_{u^1_t,u_t}\\
\vdots
\\
\delta_N (P^N_{t+1})_{u^N_t,u_t}
\end{bmatrix} + \begin{bmatrix}
R^{11} & \dots & 0 \\
\vdots & \ddots & \vdots \\
0 & \dots & R^{NN}
\end{bmatrix}
\end{align*}
\begin{align*}\tilde{g}_t(P_{t+1}) = -
\begin{bmatrix}
(G_t^{\top} Q^1 G_t)_{u^1_t,I}\\
\vdots
\\
(G_t^{\top} Q^N G_t)_{u^N_t,I}
\end{bmatrix} - \begin{bmatrix}
\delta_1 (P^1_{t+1})_{u^1_t,I}\\
\vdots
\\
\delta_N (P^N_{t+1})_{u^N_t,I}
\end{bmatrix}
\end{align*}
\begin{align*}
\tilde{g}'_t(S_{t+1}) = 
\begin{bmatrix}
(G_t^{\top} Q^1)_{u^1_t,:} l^1_t\\
\vdots
\\
(G_t^{\top} Q^N)_{u^N_t,:} l^N_t\\
\end{bmatrix} - \begin{bmatrix}
\delta_1 (S^1_{t+1})_{u^1_t,:}\\
\vdots
\\
\delta_N (S^N_{t+1})_{u^N_t,:}
\end{bmatrix}.
\end{align*}

Based on equations (\ref{eq1-1} – \ref{eq1-5}), we provide an invertibility condition for the existence of FNEs in the $T$-stage unknown-dynamics game (\ref{isosystem}, \ref{costfun}). This 
condition also enables a straightforward computation of an FNE using offline data, without requiring system identification.  Specifically, $P^{i}_{t+1}$, $S^{i}_{t+1}$, and $w^{i}_{t+1}$ are updated iteratively via equations (\ref{eq1-3} – \ref{eq1-5}), without the need to solve any additional equations. Subsequently, $K_t$ and $L_t$ are uniquely determined by solving the linear equations (\ref{eq2-1-1}) and (\ref{eq2-2-1}), guaranteed by the invertibility condition. This invertibility condition plays a central role throughout the paper. The iterative solution procedure is outlined below, and the corresponding results are formally stated in the following algorithm and theorem. 

\newpage
\textbf{Simple Iterative Process for Solving (\ref{eq1-1} - \ref{eq1-5}):}  
\begin{enumerate}
\item Initialize: set $P^{i*}_{T+1} = S^{i*}_{T+1} = w^{i*}_{T+1} = 0$.
\item Solve the linear equations (\ref{eq2-1-1}) and (\ref{eq2-2-1}) for $K^{i*}_T$ and $L^{i*}_T$.
\item Obtain $P^{i*}_T$, $S^{i*}_T$,  $w^{i*}_T$ by simple iteration of (\ref{eq1-3} – \ref{eq1-5}).
\item Solve the linear equations (\ref{eq2-1-1}) and (\ref{eq2-2-1}) for $K^{i*}_{T-1}$ and $L^{i*}_{T-1}$.
\item \dots
\item Obtain $P^{i}_1$, $S^{i}_1$,  $w^{i*}_1$ by simple iteration of (\ref{eq1-3} – \ref{eq1-5}).
\end{enumerate}
\begin{algorithm}

\caption{Backward Algorithm for T-stage Unknown-dynamics Game}\label{alg:Algorithm1}
\begin{algorithmic}
\STATE{\textbf{Input:} Offline data $\mathcal{W}_d$, initial data \(u_{\text{ini}}, y_{\text{ini}}\), length $T$, $Q^i, \delta_i, R^{ij}$, \\
$t=T$, $P^{i*}_{T+1}=0, S^{i*}_{T+1}=0, w^{i*}_{T+1}=0$, $i,j \in \mathbb{N}$}

\WHILE{\text{$|\tilde{H}_t(P^*_{t+1})| \neq 0$ and $t > 0$}} 

    \STATE{ $K_t^* \gets \tilde{H}_t(P^*_{t+1})^{-1}\tilde{g}_t(P^*_{t+1}), \quad L_t^* \gets \tilde{H}_t(P^*_{t+1})^{-1}\tilde{g}'_t(S^*_{t+1})$}
    \STATE{ Update $P^{i*}_t, S^{i*}_t, w^{i*}_t$ from (\ref{eq1-3}), (\ref{eq1-4}), (\ref{eq1-5})}
    \STATE {$t \gets t - 1$}
\ENDWHILE 
\STATE{\textbf{Output:} $K^{i*}_t, L^{i*}_t, P^{i*}_t, S^{i*}_t, w^{i*}_t, i \in \mathbb{N}, t \in \mathbb{T}$}
\end{algorithmic}
\end{algorithm}
\begin{theorem}
Consider an unknown-dynamics T-stage game (\ref{isosystem}, \ref{costfun}) with offline data $\mathcal{W}_d$ and initial data $u_{\text{ini}}, y_{\text{ini}}$ generated from an underlying dynamics (\ref{isosystem}).  Suppose that Assumption \ref{A1} holds. If $|\tilde{H}_t(P_{t+1})| \neq 0$ for all 
$t\in \mathbb T$, then the equations (\ref{eq1-1} - \ref{eq1-5}) have a unique solution $\{P^{i*}_t, S^{i*}_t, w^{i*}_t, K^{i*}_t, L^{i*}_t \mid t\in \mathbb T, i\in \mathbb N\}$. In this case, Algorithm \ref{alg:Algorithm1} will output this solution, and $\{u^{i*}_t=K^{i*}_t  U_{t-1} + L^{i*}_t \mid t\in \mathbb T, i\in \mathbb N\}$ is an FNE of the game.
\label{theorem 3}
\end{theorem}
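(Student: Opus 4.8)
The plan is to exploit the backward-recursive structure of (\ref{eq1-1}) -- (\ref{eq1-5}) together with the invertibility hypothesis, and then to certify the resulting strategy profile as an FNE via Theorem~\ref{theorem 2}. First I would record the reduction already prepared before the statement: for a fixed $P_{t+1}$, stacking equation (\ref{eq1-1}) over all players $i\in\mathbb{N}$ is exactly the square linear system (\ref{eq2-1-1}), $\tilde{H}_t(P_{t+1})K_t=\tilde{g}_t(P_{t+1})$, and stacking (\ref{eq1-2}) over all $i$ (with $P_{t+1},S_{t+1}$ fixed) is (\ref{eq2-2-1}), $\tilde{H}_t(P_{t+1})L_t=\tilde{g}'_t(S_{t+1})$, sharing the same coefficient matrix $\tilde{H}_t(P_{t+1})$. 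Since $|\tilde{H}_t(P_{t+1})|\neq 0$, each system has the unique solution $K_t=\tilde{H}_t(P_{t+1})^{-1}\tilde{g}_t(P_{t+1})$ and $L_t=\tilde{H}_t(P_{t+1})^{-1}\tilde{g}'_t(S_{t+1})$.

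Next I would prove existence and uniqueness of a solution of (\ref{eq1-1}) -- (\ref{eq1-5}) by backward induction on $t$ from $T+1$ down to $1$. The terminal condition fixes $P^{i*}_{T+1}=S^{i*}_{T+1}=w^{i*}_{T+1}=0$. For the inductive step, assume $P^{i*}_{t+1},S^{i*}_{t+1},w^{i*}_{t+1}$ are already uniquely determined; the first step then yields unique $K^{*}_t,L^{*}_t$ (invertibility of $\tilde{H}_t$ is applied at the already-fixed $P^{*}_{t+1}$), and substituting $K^{*}_t,L^{*}_t$ into the explicit update formulas (\ref{eq1-3}), (\ref{eq1-4}), (\ref{eq1-5}) returns unique $P^{i*}_t,S^{i*}_t,w^{i*}_t$. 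Hence every stage-$t$ quantity is forced by the stage-$(t+1)$ data through one invertible linear solve followed by explicit substitutions, which simultaneously constructs a solution (existence) and shows any solution must equal it (uniqueness). This recursion is precisely the body of Algorithm~\ref{alg:Algorithm1}: under the hypothesis its loop guard $|\tilde{H}_t(P^{*}_{t+1})|\neq 0$ holds at every stage, so the loop executes for $t=T,T-1,\dots,1$ and outputs exactly this unique solution.

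Finally I would upgrade the solution to an FNE. The key observation is that (\ref{eq1-1}) is stronger than the pair (\ref{eq1-1a}) and (\ref{eq1-1b}): restricting the matrix identity (\ref{eq1-1}) to the block columns indexed by $(u_{\text{ini}},y_{\text{ini}})$ and right-multiplying by $\mathbf{B}(\mathcal{B}_{T_{\text{ini}}})$ reproduces (\ref{eq1-1a}), while restricting (\ref{eq1-1}) to the block columns indexed by $u_j$, $j=1,\dots,t-1$, reproduces (\ref{eq1-1b}) verbatim. Consequently any solution of (\ref{eq1-1}) -- (\ref{eq1-5}) is also a solution of (\ref{eq1-1a}, \ref{eq1-1b}, \ref{eq1-2} -- \ref{eq1-5}), so Theorem~\ref{theorem 2}(I)(b) certifies that $\{u^{i*}_t=K^{i*}_t U_{t-1}+L^{i*}_t\mid t\in\mathbb{T},\,i\in\mathbb{N}\}$ is an FNE, with cost of the stated quadratic form $\tfrac12 U_{0}^{\top}P_1^{i*}U_{0}+(S^{i*}_1)^{\top}U_{0}+w^{i*}_1$.

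The argument is largely bookkeeping, so the main obstacle is conceptual rather than computational. The delicate point is that the hypothesis ``$|\tilde{H}_t(P_{t+1})|\neq 0$ for all $t$'' must be read relative to the $P_{t+1}$ generated along the recursion; to avoid circularity the induction must be arranged so that $\tilde{H}_t$ is always evaluated at the already-determined $P^{*}_{t+1}$, which is legitimate because the recursion is deterministic once invertibility is assumed. The second point requiring care is the column-restriction step of the third paragraph: one must track the block-column indexing of $\mathbf{K}_t$ correctly and confirm that the factor $\mathbf{B}(\mathcal{B}_{T_{\text{ini}}})$ occurs in (\ref{eq1-1a}) exactly on the $(u_{\text{ini}},y_{\text{ini}})$ columns, so that (\ref{eq1-1}) genuinely implies both (\ref{eq1-1a}) and (\ref{eq1-1b}).
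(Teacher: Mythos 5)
Your proposal is correct and follows essentially the same route as the paper's own proof: a backward recursion that alternates the invertible linear solves $\tilde{H}_t(P^{*}_{t+1})K^{*}_t=\tilde{g}_t(P^{*}_{t+1})$, $\tilde{H}_t(P^{*}_{t+1})L^{*}_t=\tilde{g}'_t(S^{*}_{t+1})$ with the explicit updates (\ref{eq1-3})--(\ref{eq1-5}), followed by splitting (\ref{eq1-1}) into its $(u_{\text{ini}},y_{\text{ini}})$ block (right-multiplied by $\mathbf{B}(\mathcal{B}_{T_{\text{ini}}})$, giving (\ref{eq1-1a})) and its $u_j$ blocks (giving (\ref{eq1-1b})), and invoking Theorem~\ref{theorem 2}(I)(b) to certify the FNE. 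Your two flagged ``delicate points'' (evaluating $\tilde{H}_t$ at the already-determined $P^{*}_{t+1}$, and the block-column bookkeeping) are exactly the steps the paper carries out.
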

\begin{proof}
    See Appendix.\ref{Proof of Theorem 3}. 
\end{proof}

\section{Infinite-horizon Unknown-dynamics Game}
\label{Infinite-horizon Unknown-dynamics Game}

This section considers the infinite-horizon game with unknown dynamics (\ref{isosystem}) and the cost function defined below. Suppose that $l^i_t = 0$ for all $i \in \mathbb N$ and $t \in \mathbb T$ throughout this section. We first examine the finite-horizon case to support the infinite-horizon game analysis and approximate its FNE.
 In the finite-horizon case, it can be verified that $L^i_t$, $S^i_t$, and $w^i_t$ are all zero, so that the FNE is a linear feedback strategy of the form $u^{i*}_t = K^{i*}_t U_{t-1}$ for all $i \in \mathbb N$ and $t \in \mathbb T$ \cite{1998basarNoncooperativeGame}. Hence, we only need to focus on equations (\ref{eq1-1}) and (\ref{eq1-3}) when the invertibility condition holds. For the infinite-horizon case, this paper restricts attention to FNEs of the linear form \cite{siam2024,TACgame2024} and provides a definition of the FNE similar to Definition 2.1 in \cite{siam2024} based on the term 
\[
U_0(t) = \mathrm{col}(u_{t-1}, \dots, u_{t-T_{\mathrm{ini}}},\; y_{t-1}, \dots, y_{t-T_{\mathrm{ini}}}),
\]
which collects the input-output data from the past $T_{\mathrm{ini}}$ steps at stage $t$ for $t=1,2,...$  When $l^i_t$ is a non-zero constant, we also present the corresponding convergence results in the numerical example in Section~\ref{Numerical Study and Further Analysis} without a theoretical analysis. 
 \begin{align}
J^i(x_1,u)=\frac{1}{2} \sum\limits_{t=1}^{+\infty} [(y_t)^{\top} Q^i y_t + \sum\limits_{j \in N } {(u^j_t)^{\top} R^{ij} u^j_t}](\delta_i)^{t-1}, 
\label{costfun2}
\end{align}

\begin{definition}[FNE in infinite-horizon unknown-dynamics games]
Suppose that the initial data satisfies $T_{\text{ini}} \geq \mathbf{l}(A,C)$ for the infinite-horizon unknown-dynamics game (\ref{isosystem}, \ref{costfun2}).   The strategy profile 
$\{\gamma^{i*}_t=K^{i*}U_0(t)\mid i=1,...,N,~ t=1,2,...\}$ 
is an FNE if and only if for any initial data $U_0(1)=\text{col}(u_{\text{ini}},y_{\text{ini}}) \in \mathcal{B}_{T_{\text{ini}}}$,  $ \forall i \in \mathbb N$, the following inequality  
 \begin{align*}
&J^i(u_{\text{ini}},y_{\text{ini}},K^{i*},K^{(-i)*}) 
\leq J^i(u_{\text{ini}},y_{\text{ini}},K^{i},K^{(-i)*}), 
 \end{align*}
holds for any $K^i \in \mathbb R^{m_i\times T_{ini}}$, where $K^{(-i)*}=\{K^{1*},...,K^{(i-1)*},K^{(i+1)*},...,K^{N*}\}$, $J^i(u_{\text{ini}},y_{\text{ini}},K^{1},...,K^{N})$ denotes the cost of player $i$ when all players adopt the strategy profile 
$\{\gamma^{i*}_t=K^{i}U_0(t)\mid i=1,...,N,~ t=1,2,...\}$. 
\end{definition}

Note that $J^i$ is well-defined, since the initial state is determined by the initial data from Lemma \ref{initialcondition2008}, and the system dynamics are fixed, even though they are unknown to the players. Consequently, for any given input sequence, the corresponding cost is uniquely determined. Next, we define a finite-horizon strategy, ``watching $T$ steps into the future and moving one step now,'' by  
\[
u^{i}_t = K^{i*}_1(T_i) U_0(t), \quad t = 1, 2, \dots,
\]
where each player $i \in \mathbb{N}$ adopts an individual prediction horizon $T_i$ given exogenously. Here, $K^{i*}_1(T_i)$ denotes the first-stage feedback matrix of the unique FNE of the $T_i$-stage unknown-dynamics game under the invertibility condition in Theorem \ref{theorem 3}. 
Under this strategy, each player $i \in \mathbb{N}$ approximates the infinite-horizon game by a $T_i$-stage game at every stage $t$, computes the FNE of the $T_i$-stage unknown-dynamics game, and uses only the first-stage feedback matrix to apply one action. This step-by-step finite-horizon strategy is particularly suitable for infinite-horizon LQ games with unknown dynamics, where limited offline data prevents the direct computation of the exact FNE. It is worth noting that the matrix $K^{i*}_1(T_i)$ is determined solely by the parameters $Q^i$, $\delta_i$, $R^{ij}$ and the offline data, and is independent of the initial data, state, inputs, or outputs. This property follows from Theorem \ref{theorem 3} and is consistent with a similar property in known-dynamics games  \cite{2025finitestrategy}. As a result, $K^{i*}_1(T_i)$ is a constant matrix that only needs to be computed once. Each player $i$ can then implement the finite-horizon strategy by playing $u^i_t = K^{i*}_1(T_i) U_0(t)$ based on the current historical data $U_0(t)$.

This section investigates the following questions:  
(i) whether the feedback matrix of the FNE of the finite-horizon unknown-dynamics game converges as the horizon length tends to infinity;  
(ii) if all players adopt the finite-horizon strategy with individual prediction horizons $T_i$ for each $i \in \mathbb{N}$ in the infinite-horizon unknown-dynamics game, whether the total cost of each player under this strategy converges to the cost under the FNE of the corresponding infinite-horizon known-dynamics game. If so, what is the explicit rate of convergence? 
In the infinite-horizon known-dynamics setting, Huang et al. \cite{2025finitestrategy} shows that the finite-horizon strategy converges in terms of player cost. We now present the following theorem, which establishes that similar convergence results for both the partial feedback matrices and the total cost also hold in the unknown-dynamics setting under suitable conditions. In this theorem, let $K^{i*}_t(T)$  and $\overline{K}^{i*}_t(T)$ denote the feedback matrices corresponding to the unique FNEs of the $T$-stage game with unknown dynamics and known dynamics under the invertibility condition, respectively. In the infinite-horizon game with known dynamics, let $\overline{K}^{i*}$ denote the FNE feedback matrices associated with the limiting matrices of the finite-horizon known-dynamics game \cite{2025finitestrategy}. Hereafter, the FNE of the infinite-horizon known-dynamics game refers to this particular equilibrium.

\begin{theorem}
For the known-dynamics game (\ref{isosystem}, \ref{costfun2}), suppose that system (\ref{isosystem}) is controllable, Assumption 1 in \cite{2025finitestrategy} holds. Under the FNE associated with the 
limiting matrices in  the infinite-horizon known-dynamics game, the total cost of player $i$ is denoted by $J^i(x_1) \triangleq J^i$, for all $i \in \mathbb{N}$. 
For the unknown-dynamics game (\ref{isosystem}, \ref{costfun2}), suppose that sufficient offline input/output data are available such that Assumption \ref{A1} holds for any $L \in \mathbb{N}_{+}$, and $|\tilde{H}_t(P_{t+1})| \neq 0$ for $t = 1, \dots, T$, with $P_{T+1} = 0$ for any $T \in \mathbb{N}_+$. 

In the infinite-horizon game with unknown dynamics, suppose all players adopt the finite-horizon strategies of the form $u^{i}_t = K^{i*}_1(T_i) U_0(t)$, where each player $i \in \mathbb{N}$ uses an individual prediction horizon $T_i$ given exogenously, $K^{i*}_1(T_i)$ denotes the first-stage feedback matrix of the unique FNE of the $T_i$-stage unknown-dynamics game. 
The total cost of player $i$ under the finite-horizon strategy is denoted by $\tilde{J}^i(u_{\text{ini}}, y_{\text{ini}})(T_1, \dots, T_n) $ $\triangleq \tilde{J}^i(u_{\text{ini}}, y_{\text{ini}})$. 
Then, we have
\begin{enumerate}
    \item For any $t \in \mathbb{N}_+$, $\lim\limits_{T \rightarrow +\infty} (K^{i*}_t(T))_{:,u_k}=\overline K^{i*} A^{t-k-1}B$ for $k = 1, \dots, t - 1$.
    
    \item $\lim\limits_{T_m \rightarrow +\infty} \tilde{J}^i(u_{\text{ini}}, y_{\text{ini}}) = J^i$, where $T_m = \min(T_1, \dots, T_N)$.
    
    \item If $\lim\limits_{T \rightarrow +\infty} K^{i*}_1(T) \triangleq  K^{i*}$ exists and is finite for all $i \in \mathbb{N}$, then 
    $$\{ u^{i*}_t=K^{i*}U_0(t)\mid i\in \mathbb N, ~t=1,2,...\}$$ 
    is an FNE of the infinite-horizon unknown-dynamics game, and we have 
    \[
    \tilde{J}^i(u_{\text{ini}}, y_{\text{ini}}) - J^i = O\left( \max\limits_{i \in \mathbb{N}} \|K^{i*}_1(T_i) - K^{i*} \|_2 \right), \quad \text{as } T_m \rightarrow +\infty.
    \]
\end{enumerate}
\label{theorem 4}
\end{theorem}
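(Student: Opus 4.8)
The plan is to reduce each of the three claims to a corresponding statement about the known-dynamics game, where finite-horizon-strategy convergence is already available from \cite{2025finitestrategy}, and then transport it through the FNE equivalence of Theorem \ref{theorem 1}. Since $|\tilde H_t(P_{t+1})|\neq0$ holds at every stage, Theorem \ref{theorem 3} gives each $T$-stage game a unique FNE, so the refined relations (\ref{strategy matrices equation2}) apply: the unknown- and known-dynamics gains satisfy $(K^{i*}_t(T))_{:,u_k}=\overline K^{i*}_t(T)A^{t-k-1}B$ for $k=1,\dots,t-1$ and $(K^{i*}_t(T))_{:,(u_{\text{ini}},y_{\text{ini}})}\,\mathrm{col}(u_{\text{ini}},y_{\text{ini}})=\overline K^{i*}_t(T)A^{t-1}x_1$. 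Writing $\Phi$ for the linear map sending a length-$T_{\text{ini}}$ window to the state immediately following it — furnished by Lemma \ref{initialcondition2008}, surjective onto $\mathbb{R}^n$ by controllability, independent of $T$, and satisfying $\Phi\,\mathrm{col}(u_{\text{ini}},y_{\text{ini}})=x_1$ and, by time-invariance, $\Phi U_0(t)=x_t$ — the $t=1$ case reads $K^{i*}_1(T)w=\overline K^{i*}_1(T)\Phi w$ for every $w\in\mathcal{B}_{T_{\text{ini}}}$.

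For part 1, I would fix $t$ and note that in the known-dynamics game the backward recursion depends only on the number of remaining stages (the stage-wise discount $\delta_i^{t-1}$ factors out of each player's best-response condition), so $\overline K^{i*}_t(T)=\overline K^{i*}_1(T-t+1)$; letting $T\to\infty$ and using the first-stage convergence $\overline K^{i*}_1(T)\to\overline K^{i*}$ from \cite{2025finitestrategy} gives $\overline K^{i*}_t(T)\to\overline K^{i*}$, and substituting into $(K^{i*}_t(T))_{:,u_k}=\overline K^{i*}_t(T)A^{t-k-1}B$ with $A,B$ fixed yields the claimed limit. For part 2, the key step is a trajectory identification: when every player plays $u^i_t=K^{i*}_1(T_i)U_0(t)$, applying $K^{i*}_1(T_i)w=\overline K^{i*}_1(T_i)\Phi w$ at run time with $w=U_0(t)\in\mathcal{B}_{T_{\text{ini}}}$ (so $\Phi U_0(t)=x_t$) gives $u^i_t=\overline K^{i*}_1(T_i)x_t$. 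Since the underlying dynamics are fixed, the entire closed-loop trajectory, and therefore $\tilde J^i(u_{\text{ini}},y_{\text{ini}})$, coincides exactly with that of the known-dynamics game in which all players use the finite-horizon strategy $\overline K^{i*}_1(T_i)$; the cost-convergence result of \cite{2025finitestrategy} for heterogeneous horizons then gives $\tilde J^i\to J^i$ as $T_m=\min_iT_i\to\infty$, the infinite sum being finite for large $T_m$ because the limiting gains are stabilizing.

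For part 3, passing to the limit in $K^{i*}_1(T)w=\overline K^{i*}_1(T)\Phi w$ (using $\overline K^{i*}_1(T)\to\overline K^{i*}$, which defines $\overline K^{i*}$) yields $K^{i*}U_0(t)=\overline K^{i*}x_t$, so the candidate strategy coincides with the infinite-horizon known-dynamics FNE along every behavior trajectory. To conclude it is an FNE of the unknown game, I fix the opponents at $K^{j*}$ (equivalently $\overline K^{j*}$) and argue that player $i$'s best-response problem is a standard discounted LQ regulator for which the physical state $x_t$ is a sufficient statistic, so no feedback $K^iU_0(t)$ on the larger window can beat the state-feedback optimum $\overline K^{i*}x_t=K^{i*}U_0(t)$. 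For the rate, part 2 reduces $\tilde J^i-J^i$ to the known-dynamics cost gap between $\overline K^{i*}_1(T_i)$ and $\overline K^{i*}$, which is $O(\max_i\|\overline K^{i*}_1(T_i)-\overline K^{i*}\|_2)$ by local Lipschitz continuity of the LQ cost in the gains about the stabilizing FNE (smooth dependence of the closed-loop Lyapunov solution). Finally $(\overline K^{i*}_1(T_i)-\overline K^{i*})\Phi=(K^{i*}_1(T_i)-K^{i*})|_{\mathcal{B}}$ with $\Phi$ fixed and of full row rank makes the two deviations comparable up to a $T$-independent constant, and since the restricted norm is dominated by the full norm, the bound in terms of $\|K^{i*}_1(T_i)-K^{i*}\|_2$ follows.

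The hardest step will be the sufficient-statistic argument in part 3: one must justify rigorously that enlarging player $i$'s information from $x_t$ to the full window $U_0(t)$ cannot lower her cost, even though a history-dependent deviation renders the closed loop non-Markov in $x_t$. The clean route is the value-function characterization of the discounted LQ regulator, namely that the infimum over all admissible (including history-dependent) controls equals the state-feedback optimum, with finiteness guaranteed by controllability; one must also check that $\overline K^{i*}_1(T_i)$ is stabilizing for large $T_i$ so that the Lipschitz estimate underlying the rate is valid.
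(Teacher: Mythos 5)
Your proposal is correct and follows essentially the same skeleton as the paper's proof: translate everything to the known-dynamics game via the gain correspondences of Theorem \ref{theorem 1} and Remark \ref{remark} (equation (\ref{strategy matrices equation2})); invoke the finite-horizon-strategy results of \cite{2025finitestrategy}; identify the unknown-dynamics closed loop with the known-dynamics one stage by stage through matched pairs $(U_0(t),x_t)$, so that $\tilde J^i(u_{\text{ini}},y_{\text{ini}})=\tilde J^i(x_1)$; and compare $\|K^{i*}_1(T_i)-K^{i*}\|_2$ with $\|\overline K^{i*}_1(T_i)-\overline K^{i*}\|_2$ through a fixed right inverse of your map $\Phi$ --- the paper constructs exactly this right inverse as the matrix $Z$ whose columns are windows matched with $e_1,\dots,e_n$ (controllability guarantees their existence), and your ``local Lipschitz continuity of the LQ cost in the gains'' is precisely the explicit bound of Theorem 1 in \cite{2025finitestrategy} with constants $\theta_{i1},\theta_{i2},\theta_{i3}$, including the stabilizing-gain caveat you flag. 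Two genuine local differences: (i) in part 1 you derive $\overline K^{i*}_t(T)=\overline K^{i*}_1(T-t+1)$ by a discount-renormalization/shift argument and use only first-stage convergence, whereas the paper cites Lemma 2 of \cite{2025finitestrategy} directly for $\lim_{T}\overline K^{i*}_t(T)=\overline K^{i*}$ at fixed $t$; your identity is valid here (zero references, time-invariant data, subgame perfection, and invariance of best responses under per-player positive rescaling) and is mildly more self-contained; (ii) in part 3 you are \emph{more} careful than the paper. The paper concludes the FNE property in one line, ``since $\overline K^{i*}x_t$ forms an FNE of the known-dynamics game,'' without addressing that deviations $K^iU_0(t)$ form a strictly richer class than linear state feedbacks: $\Phi$ is not injective on $\mathcal{B}_{T_{\text{ini}}}$ (its dimension $mT_{\text{ini}}+n$ exceeds $n$), so a window feedback need not factor through $x_t$. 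Your sufficient-statistic argument --- opponents' window gains reduce to state feedback along every admissible trajectory even after a deviation, so the deviating player faces a discounted LQ problem in $x_t$ whose optimum over history-dependent controls is attained by a linear state feedback --- is exactly what is needed to close this step, and you rightly identify the well-posedness subtlety caused by the possibly indefinite weights $R^{ij}$, $j\neq i$, which the paper does not discuss.

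One small repair: Theorem \ref{theorem 3} gives uniqueness of the solution of (\ref{eq1-1}--\ref{eq1-5}) and that it yields \emph{an} FNE, not uniqueness of the FNE itself. The uniqueness you need in order to invoke Remark \ref{remark} and (\ref{strategy matrices equation2}) comes, as in the paper, from Proposition 1 of \cite{2025finitestrategy} (known-dynamics uniqueness under its Assumption 1) combined with the FNE-set equivalence of Theorem \ref{theorem 1}; this is available under your stated hypotheses, so the slip is harmless but should be fixed.
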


\begin{proof}
    See Appendix.\ref{Proof of Theorem 4}. 
\end{proof}

\begin{algorithm}
\caption{Approximate Algorithm for \(+\infty\)-stage Unknown-dynamics Game}\label{alg:Algorithm2}
\begin{algorithmic}
\STATE{\textbf{Input:} Offline data \(\mathcal{W}_d\), initial data \( u_{\text{ini}}, y_{\text{ini}} \), \( T_{\text{limit}} \), 
\( Q^i, \delta_i, R^{ij} \), \( T = 1 \), \( \epsilon = 0.01 \), \( k = 0 \), \( M = 1000 \), \( K^{i*}_1 = \mathbf{0} \), \( \tilde{J}^{i*} = 0 \),  \( U(0) = (u_{\text{ini}}, y_{\text{ini}}) \), \( i,j \in \mathbb{N} \)}

\STATE{ Compute \( K^{i*}_1(T) ~\text{and}~ K^{i*}_1(T+1)\) using Algorithm \ref{alg:Algorithm1}, \( K^{i*}_1 \gets K^{i*}_1(T), ~ i \in \mathbb{N} \)}

\WHILE{$\max\limits_{i \in \mathbb{N}} \| K^{i*}_1(T) - K^{i*}_1(T+1) \|_2 > \epsilon$ and \( T \leq T_{\text{limit}} \)}
    \STATE{ \( T \gets T + 1 \)}
    \STATE{Compute \( K^{i*}_1(T) ~\text{and}~ K^{i*}_1(T+1), ~ i \in \mathbb{N} \) using Algorithm \ref{alg:Algorithm1}}
\ENDWHILE

\STATE{ \( K^{i*}_1 \gets K^{i*}_1(T), ~ i \in \mathbb{N} \)}

\WHILE{$k \leq M$}
    \STATE{ \( \tilde{J}^{i*} \gets \tilde{J}^{i*} + U(k)^{\top} \left[ (G_1 \mathbf{K}_1^*)^{\top} Q^i (G_1 \mathbf{K}_1^*) + \sum_{j \in \mathbb{N}} (K^{j*}_1)^{\top} R^{ij} (K^{j*}_1) \right] U(k) (\delta_i)^{k-1} \)}
    \STATE{ \( k \gets k + 1 \), \( u^i_k = K^{i*}_1 U(k-1) \), \( y_k = G_1 \mathbf{K}_1^* U(k-1) \)}
    \STATE{\( U(k) \gets (u_{k-T_{\text{ini}}+1}, \dots, u_{k-1}, u_k, y_{k-T_{\text{ini}}+1}, \dots, y_{k-1}, y_k) \)}
\ENDWHILE

\STATE{\textbf{Output:} \( K^{i*}_1, \tilde{J}^{i*}, i \in \mathbb{N} \)}
\end{algorithmic}
\end{algorithm}

Theorem \ref{theorem 4} demonstrates the convergence property of the finite-horizon strategies. Although the columns of the feedback matrices $K^{i*}_t(T)$ corresponding to $u_k$ for $k = 1, \dots, t - 1$ converge as $T \rightarrow +\infty$, it is unclear whether $(K^{i*}_t(T))_{:,(u_{\text{ini}}, y_{\text{ini}})}$ also converges. The second part of the theorem shows that each player's total cost under the finite-horizon strategy in the infinite-horizon game with unknown dynamics converges to the total cost under the particular FNE of the corresponding game with known dynamics. This result addresses two major challenges in approximating the FNE: the infinite horizon and the unknown dynamics of the game. Furthermore, we prove that the convergence rate of the cost difference is no slower than that of the distance between the corresponding feedback matrices.  

However, in certain scenarios, it is unclear whether the underlying dynamics satisfy the assumptions in Theorem \ref{theorem 4}, and the available offline data may be limited. Thus, we provide Algorithm \ref{alg:Algorithm2} to help players obtain approximate strategies. If the relevant assumptions hold, this algorithm yields finite-horizon strategies and their associated costs, where $T_{\text{limit}}$ denotes the maximum prediction horizon that offline data can support, i.e., the largest value of $T$ for which Assumption \ref{A1} holds. The algorithm may fail in some cases, such as when the offline data is insufficient or the system dynamics do not satisfy Assumption 1 in \cite{2025finitestrategy}.

\section{Numerical Study and Further Analysis}
\label{Numerical Study and Further Analysis}

In this section, we present simulations for two-player games based on the following linear system and objective functions. The goal is to illustrate the convergence of the feedback matrices $K^{i*}_1(T), L^{i*}_1(T)$ and the total cost under the finite-horizon strategy as $T \rightarrow +\infty$. 
\begin{equation}
\label{isosystem4}
\begin{aligned}
x_{t+1} &= \begin{bmatrix}
0.9 & 0.2   & -0.6\\
-0.4   & 0.8 & 0.2\\
0.5   & 0.3   & 0.1
\end{bmatrix}x_t + \begin{bmatrix}
1 & -0.3\\
-2 & 0.5\\
-0.3 & 0.3
\end{bmatrix}\begin{bmatrix}
u^1_t \\
u^2_t 
\end{bmatrix} ,\\
y_t &= \begin{bmatrix}
-1 & 0.3 & -0.2\\
-0.1 & 0.5  & 1
\end{bmatrix}x_t + \begin{bmatrix}
0.1  & 0.5\\
-0.4  & 0.1
\end{bmatrix}\begin{bmatrix}
u^1_t \\
u^2_t 
\end{bmatrix}.
\end{aligned}
\end{equation}

\begin{equation}
\label{costfun4}
\begin{aligned}
&J_1(x_1,u)=\frac{1}{2} \sum_{t=1}^{T} [(y_t-\scalebox{0.9}{$\begin{bmatrix} -1\\ 0.3 \end{bmatrix}$})^{\top} \begin{bmatrix} 2 & 0.2\\ 0.2 & 2 \end{bmatrix} (y_t-\scalebox{0.9}{$\begin{bmatrix} -1\\ 0.3 \end{bmatrix}$}) + 0.5 (u_t^1)^2 - 0.1 (u_t^2)^2] (0.8)^{t-1}
\\
&J_2(x_1,u)=\frac{1}{2} \sum_{t=1}^{T} [(y_t-\scalebox{0.9}{$\begin{bmatrix} 0.4\\ -0.3 \end{bmatrix}$})^{\top} \begin{bmatrix} 3 & 0.5\\ 0.5 & 3 \end{bmatrix} (y_t-\scalebox{0.9}{$\begin{bmatrix} 0.4\\ -0.3 \end{bmatrix}$}) - 0.3 (u_t^1)^2 + (u_t^2)^2] (0.9)^{t-1}
\end{aligned}
\end{equation}

Consider the games with dynamics (\ref{isosystem4}) and objective function (\ref{costfun4}), where $u_t=\text{col}(u^1_t, u^2_t) \in \mathbb{R}^2$, $x_t \in \mathbb{R}^3$, and $y_t \in \mathbb{R}^2$. Let $x_1 = \text{col}(1.885, -3.208, -0.922)$ be randomly generated. For simplicity, let $T_{\text{ini}} = \mathbf{l}(A,C) = 2$. The initial data are $u_{\text{ini}} = \text{col}(u_{-T_{\text{ini}}+1}, ..., u_0) = (u^1_{-1}, u^2_{-1}, u^1_0, u^2_0) = \text{col}(-0.640, -4.741, 0.497, -0.647)$, and $y_{\text{ini}} = \text{col}(y_{-T_{\text{ini}}+1}, ..., y_0) = (y_{-1}, y_0) = \text{col}(-1.534, -5.884, -0.637, -3.849)$. It can be verified that the pair $(u_{\text{ini}}, y_{\text{ini}})$ is matched with $x_1$. We consider a 500-stage offline input/output data $u^d_t, y^d_t$ for $t = 1, ..., 500$, where $u^d_t$ is randomly generated from a uniform distribution on $[-5, 5]$, $x^d_1 = \mathbf{0}$, and $y^d_t$ is generated using the dynamics (\ref{isosystem4}) with $x^d_1$ and $u^d_t$. Here, the superscript $d$ indicates offline data. When $T \leq 100$ and $L = T + T_{\text{ini}}$, Assumption \ref{A1} is satisfied by these data.

\begin{figure}
    \centering
    \includegraphics[width=1\linewidth]{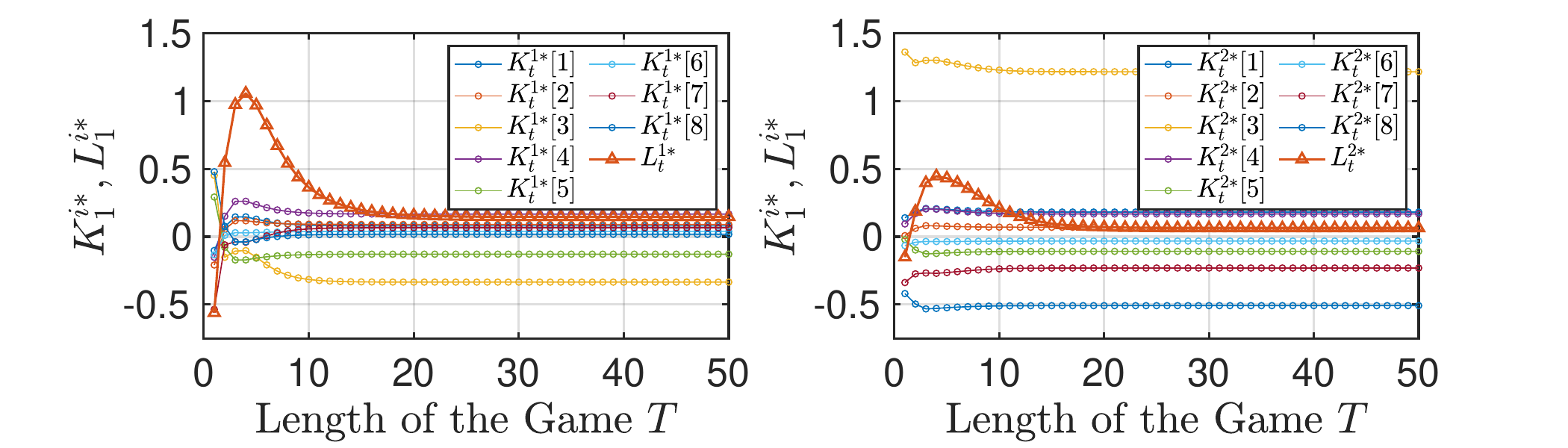}
  \caption{The unique FNE matrices at stage 1, $K^{i*}_1(T) \triangleq K^{i*}_1$ and $L^{i*}_1(T) \triangleq L^{i*}_1$, of the unknown-dynamics game with (\ref{isosystem4}), (\ref{costfun4}), and different game lengths $T = 1, \dots, 50$, for players $i = 1, 2$ and $t = 1, \dots, 100$. Here, $K^{i*}_1(T)[j] \triangleq K^{i*}_1[j]$, $j = 1, \dots, 8$, denotes all the components of $K^{i*}_1(T)$.}

    \label{fig.KL_unknown_T}
\end{figure}
To examine the convergence of the FNE matrices $K^{i*}_1(T) \in \mathbb{R}^{8\times 1}$ and $L^{i*}_1(T) \in \mathbb{R}$ for the unknown-dynamics game, we compute the FNEs of the $T$-stage games for $T = 1, \dots, 50$ using Algorithm~\ref{alg:Algorithm1}. Fig.~\ref{fig.KL_unknown_T} shows the 8 components of $K^{i*}_1(T)$ and $L^{i*}_1(T)$, which exhibit convergence as the game length $T$ tends to infinity, even when the objective trajectories are non-zero. The FNE of the infinite-horizon unknown-dynamics game, which corresponds to the finite-horizon strategy approximation, is also shown below. Here, $U_0(t) = \text{col}(u_{t-1}, \dots, u_{t - T_{\text{ini}}}, y_{t-1}, \dots, y_{t - T_{\text{ini}}})$ represents the data from the past $T_{\text{ini}}$ steps at stage $t=1,2,...$
\begin{align*}
u^{1*}_t&= K^{1*}U_0(t)+L^{1*}\\
&= [0.079, 0.090, -0.335, 0.167, -0.129, 0.039, 0.067, 0.018]
U_0(t) + 0.146,\\
u^{2*}_t&= K^{2*}U_0(t)+L^{2*}\\
&= [0.182, 0.069, 1.217, 0.168, -0.108, -0.032, -0.231, -0.509]
U_0(t)  + 0.064.
\end{align*}

\begin{figure}
    \centering
    \includegraphics[width=0.7\linewidth]{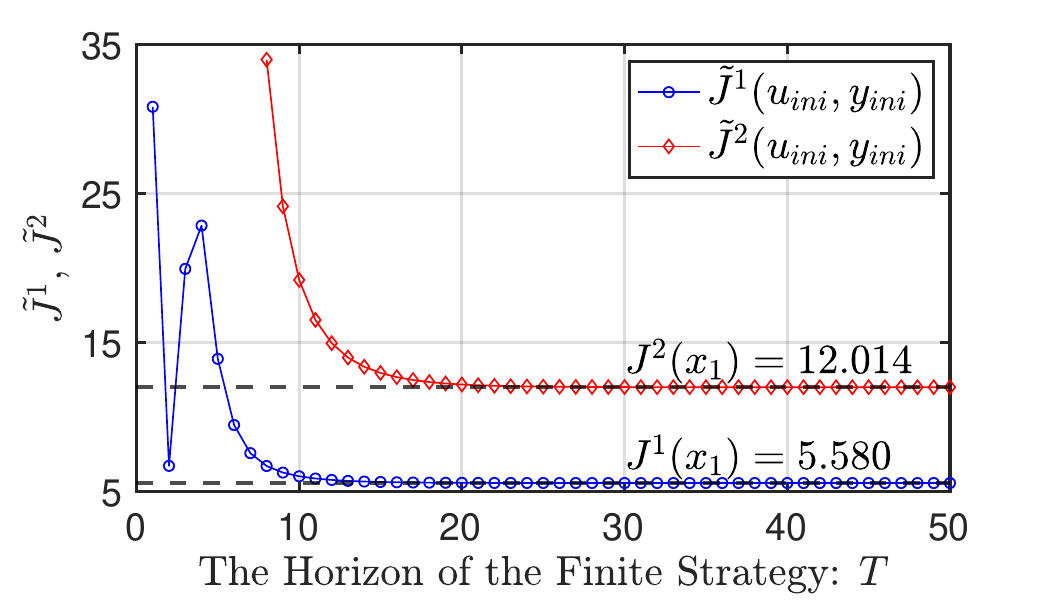}
    \caption{The total cost $\tilde{J}^i(u_{\text{ini}}, y_{\text{ini}})(T) \triangleq \tilde{J}^i(u_{\text{ini}}, y_{\text{ini}})$ under the finite-horizon  strategy in the infinite-horizon unknown dynamics game with (\ref{isosystem4}) and (\ref{costfun4}), for players $i=1,2$ and $T=1,...,50$. The horizontal black dashed lines represent the total cost $J^i(x_1)$ under the FNE of the infinite-horizon known-dynamics game for players $i=1,2$, where $x_1$ is matched with $u_{\text{ini}}, y_{\text{ini}}$.}
    \label{fig.costTunknown}
\end{figure}

For each prediction horizon $T = 1, 2, \dots, 50$ in the finite-horizon strategy, we compute the corresponding cost of player $i$ in the infinite-horizon game under these strategies, using Algorithm~\ref{alg:Algorithm2}. This cost is denoted by $\tilde{J}^i(u_{\text{ini}}, y_{\text{ini}})(T) \triangleq \tilde{J}^i(u_{\text{ini}}, y_{\text{ini}})$.  The results are shown in Fig.~\ref{fig.costTunknown}, where $\tilde{J}^i(u_{\text{ini}}, y_{\text{ini}})$ converges to $J^i(x_1)$ as $T$ tends to infinity. Here, $J^i(x_1)$ denotes the total cost of player $i$ under the unique FNE of the infinite-horizon game with known dynamics, where the initial state $x_1$ is matched  with the initial data $u_{\text{ini}}, y_{\text{ini}}$.
When the differences $\|K^{i*}_1(T) - K^{i*}_1(T+1)\|_2$ for $i = 1,2$ become small, e.g., at $T = 20$ in Fig. \ref{fig.KL_unknown_T}, the corresponding cost under the strategy ``watching 20 steps into the future and moving one step now'' is already close to $J^i(x_1)$. Even without knowing the system dynamics and given  100-length offline data and 2-length initial data, players can approximate the infinite-horizon FNE of the unknown-dynamics game using finite-horizon strategies.

\section{Conclusion}
\label{conclusion}
We use a data-driven and behavioral system approach to analyze the discrete-time linear-quadratic game with unknown input/output/state dynamics and offline input/output data. We present the detailed relationships between the FNEs of the unknown- and known-dynamics games, including the equivalence of their FNE sets. Additionally, we propose the coupled equations based on offline data, which provide the sufficient and necessary conditions for the existence and uniqueness of FNEs in the finite-horizon unknown-dynamics game. We also provide a sufficient condition along with an  algorithm to simplify the computation of the FNE. Finally, we prove that, in the infinite-horizon unknown-dynamics game, the finite-horizon strategy of ``watching $T$ steps into the future and moving one step now,'' supported by the limited offline data, exhibits convergence properties in terms of partial feedback matrices of the FNE of the finite-horizon game and related total costs. Furthermore, the convergence rate of the cost difference is no slower than that of the distance between the corresponding feedback matrices. Some remaining problems include the unknown-dynamics game with unknown objective functions of other players, noisy offline data, non-linear dynamics, and how these factors can naturally characterize more realistic competition scenarios.

\appendix
\section{Proof of \cref{theorem 1}}
\label{Proof of Theorem 1}

\begin{proof}
{\bf (a) Affine Form Strategy: } We first prove that any FNE of the $T$-stage unknown-dynamics game (\ref{isosystem}, \ref{costfun}) has an affine form, and the cost of each player under the FNE has a quadratic form.

\begin{equation}
    \label{program T}
\begin{aligned}    \min\limits_{u^i_T} &\frac{1}{2} [(y_T-l^i_T)^{\top} Q^i (y_T-l^i_T) + \sum\limits_{j \in N } {(u^j_T)^{\top} R^{ij} u^j_T}](\delta_i)^{T-1} \\  
&s.t. \begin{bmatrix}
 U_p \\
 Y_p \\
 U_f \\
 Y_f
 \end{bmatrix}g =
 \begin{bmatrix}
 u_{\text{ini}} \\
 y_{\text{ini}} \\
 u \\
 y
 \end{bmatrix}, ~~\left\{\begin{aligned}
    &u=\text{col}(u_1,...,u_T)\\ &y=\text{col}(y_1,...,y_T)\\&u_T=\text{col}(u_T^1,...,u_T^N)
 \end{aligned}\right.
    \end{aligned}
    \end{equation}

At stage $T$, player $i$ considers the optimal program (\ref{program T}) when $(u_1,...,u_{T-1}),$ $ (y_1,...,y_{T-1})$ and $u^j_T, j\in \mathbb N\setminus {i}$ are given. Thus, this program only has three variables: $g, u^i_T, y_T$. Since $x_1$ is uniquely determined by $u_{\text{ini}}$ and $y_{\text{ini}}$ from Lemma \ref{initialcondition2008}, all outputs $y_1,\dots,y_T$ are uniquely determined by the inputs $u_1,\dots,u_T$. Thus, we only need to focus on the equations without $y$:
$$\begin{bmatrix}
 U_p \\
 Y_p \\
 U_f 
 \end{bmatrix}g =
 \begin{bmatrix}
 u_{\text{ini}} \\
 y_{\text{ini}} \\
 u 
 \end{bmatrix}.$$
  Thus, $g = M_T^{\dag} \operatorname{col}(u_{\text{ini}}, y_{\text{ini}}, u)$, where $M_T = \operatorname{col}(U_p, Y_p, U_f)$ and $M_T^{\dag}$ is the pseudoinverse of $M_T$.
We recall the following notations:
$Y_f = \operatorname{col}(Y_{f1}, Y_{f2}, \dots, Y_{fT})$,
$Y_{fT} = (Y_f)_{y_T,:}$,
$y_T = Y_{fT} g$,
$M = Y_{fT} M_T^{\dag}$,
$M = [M_{:,u_{\text{ini}}}, M_{:,y_{\text{ini}}}, M_{:,u_1}, \dots, M_{:,u_N}]$.
Then, program (\ref{program T}) is equivalent to
    \begin{equation}
    \begin{aligned}&\min\limits_{u^i_T}
    \frac{1}{2} [\mathbf{M}^{\top} Q^i\mathbf{M} 
     + \sum\limits_{j \in N } {(u^j_T)^{\top} R^{ij} u^j_T}](\delta_i)^{T-1}\\     &\mathbf{M}=M_{:,u_{\text{ini}}}u_{\text{ini}}+M_{:,y_{\text{ini}}}y_{\text{ini}}+M_{:,u_{1}}u_{1}+...+M_{:,u_{T}}u_{T}-l^i_T.
     \end{aligned}
    \label{T-cost}
    \end{equation}
    
 Since $Q^i$ is semidefinite and $R^{ii}$ is positive definite, (\ref{T-cost}) is strictly convex in $u^i_T$. Therefore, player $i$'s best response exists and is unique, and it equals the unique root of the derivative of (\ref{T-cost}).
Using $\frac{\partial}{\partial x} \left( (Ax + b)^{\top} Q (Ax + b) \right) = 2A^{\top} Q (Ax + b)$, the derivative of (\ref{T-cost}) with respect to $u^i_T$ yields a linear equation in $u_{\text{ini}}, y_{\text{ini}}, u$
    \begin{align*}
    &(M_{:,u_{T}^i})^{\top} Q^i \mathbf{M} 
     +  R^{ii} u^i_T=0,
     \end{align*}
and the coefficients of $u_{\text{ini}},y_{\text{ini}}, u_1, \ldots, u_T$ come solely from the offline data and the parameters in the players' objectives. By combining the $N$ equations from all players, we obtain
\begin{align*}
F_T \text{col}(u^1_T,...,u^N_T) = G_T \text{col}(u_{\text{ini}},y_{\text{ini}}, u_1,..., u_{T-1}) + H_T=G_T U_{T-1}+ H_T,
\end{align*}
    where $F_T \in \mathbb{R}^{m \times m}$. These linear equations imply that the FNE at stage $T$ is an affine function of $U_{T-1}$.  
When all players adopt one of the solutions   $u^{i*}_T(U_{T-1})=K^{i*}_TU_{T-1}+L^{i*}_T$, the cost of player $i$ at stage $T$ is as follows 
    \begin{align*}
    V^i_T(U_{T-1})
    =&\frac{1}{2} [(M \text{col}(U_{T-1}, u^*_T(U_{T-1}))-l^i_T)^{\top} Q^i (M \text{col}(U_{T-1}, u^*_T(U_{T-1}))-l^i_T) \\
    + &\sum\limits_{j \in N } {(u^{j*}_T(U_{T-1}))^{\top} R^{ij} u^{j*}_T(U_{T-1})}](\delta_i)^{T-1}
    \\ \triangleq &\frac{1}{2} (U_{T-1})^{\top} P_T^{i*} U_{T-1} + (S^{i*}_T)^{\top} (U_{T-1}) + w^{i*}_T.
    \end{align*}

 At stage $T-1$, suppose that at least one FNE exists at stage $T$, given by $u^{i*}_T(U_{T-1})=K^{i*}_TU_{T-1}+L^{i*}_T$, and all players adopt this strategy. 
Player $i$ then considers the  program below when $(u_1, \ldots, u_{T-2})$, $(y_1, \ldots, y_{T-2})$, and $u^j_{T-1}$ for all $j \in \mathbb{N} \setminus {i}$ are given. Since $V^i_T(U_{T-1})$ takes a quadratic form, we similarly obtain that the FNE at stage $T-1$ has an affine form. The $(T-1)$-stage cost is a quadratic function with respect to $U_{T-2}$.
By repeating the same analysis for stages $T-2, T-3, \ldots, 1$, the proof of part (a) is complete. 
   \begin{equation}
    \label{program T-1}
\begin{aligned}   \min\limits_{u^i_{T-1}} &\frac{1}{2} (\delta_i)^{T-2} (y_{T-1}-l^i_{T-1})^{\top} Q^i (y_{T-1}-l^i_{T-1}) 
     + V^i_T(U_{T-1})\\
     +&\frac{1}{2} (\delta_i)^{T-2}\sum\limits_{j \in N } {(u^j_{T-1})^{\top} R^{ij} u^j_{T-1}}\\  
    &s.t.\begin{bmatrix}
 U_p \\
 Y_p \\
 (U_f)_{1:T-1,:} \\
 (Y_f)_{1:T-1,:}
 \end{bmatrix}g =
 \begin{bmatrix}
 u_{\text{ini}} \\
 y_{\text{ini}} \\
 u \\
 y
 \end{bmatrix}, ~~\left\{\begin{aligned}
&u=\text{col}(u_1,...,u_{T-1}),\\
&y=\text{col}(y_1,...,y_{T-1}),\\
&u_{T-1}=\text{col}(u_{T-1}^1,...,u_{T-1}^N).
\end{aligned}\right.   
\end{aligned}
\end{equation}

{\bf (b) Multiple Feasible Set Equivalence:} Now we prove that the unknown-dynamics and known-dynamics games share the same set of FNEs in terms of the input variables. Since the cost functions for all players are the same in both settings, it suffices to prove the equivalence of their feasible sets. 
In the unknown-dynamics game, each player $i \in \mathbb{N}$ aims to minimize her cost function (\ref{costfun}) subject to the feasible set represented by the behavioral system 
\begin{equation*}
\begin{aligned}    
\begin{bmatrix}
 U_p \\
 Y_p \\
 U_f \\
 Y_f
 \end{bmatrix}g =
 \begin{bmatrix}
 u_{\text{ini}} \\
 y_{\text{ini}} \\
 u \\
 y
 \end{bmatrix}, ~~\left\{\begin{aligned}
    &u=\text{col}(u_1,...,u_T)\\ &y=\text{col}(y_1,...,y_T)\\&u_t=\text{col}(u_t^1,...,u_t^N),t\in \mathbb T. 
 \end{aligned}\right.
    \end{aligned}
    \end{equation*} 
In the known-dynamics game, each player $i \in \mathbb{N}$ aims to the same cost function subject to the feasible set represented by the linear  system 
\begin{equation}
    \label{program T known}
\begin{aligned}\left\{\begin{aligned}
 &x_{t+1} = Ax_t+ Bu_t, x_1 ~\text{is given} \\
 &y_t = Cx_t+ Du_t, t=-T_{\text{ini}}+1,...,T\\
 &\text{col}(u_k, {-T_{\text{ini}}+1 \leq k \leq 0})=u_{\text{ini}} \\
 &\text{col}(y_k, {-T_{\text{ini}}+1 \leq k \leq 0})=y_{\text{ini}}
 \end{aligned}   \right. \end{aligned}
    \end{equation}

By Lemma \ref{initialcondition2008} and Assumption \ref{A1}, $x_1$ is uniquely determined by $u_{\text{ini}}$ and $y_{\text{ini}}$. Therefore, in the feasible set represented by the linear system, $x_1$ can be omitted.
According to Lemma \ref{nonpersistentexcitation2022}, we have 
$\text{Image}(\mathcal{H}_{T_{\text{ini}}+T}(w^d))=\mathcal{B}_{T_{\text{ini}}+T}$. Hence, the feasible set represented by the behavioral system is  $\{(u, y) \mid\text{col}(u_{\text{ini}}, u, y_{\text{ini}}, y) \in \mathcal{B}_{T_{\text{ini}}+T}\}$.
By the definition of the i/o/s  representation, we obtain 

\[
\begin{aligned}
&\{(u, y) | \text{col}(u_{\text{ini}}, u, y_{\text{ini}}, y) \in \mathcal{B}_{T_{\text{ini}}+T}\}\\
= &\{(u, y)\mid 
\begin{aligned}
&\begin{bmatrix}
 x_{t+1} = Ax_t+ Bu_t \\
 y_t = Cx_t+ Du_t\\
 \text{col}(u_k, {-T_{\text{ini}}+1 \leq k \leq 0})=u_{\text{ini}} \\
 \text{col}(y_k, {-T_{\text{ini}}+1 \leq k \leq 0})=y_{\text{ini}} 
 \end{bmatrix}, -T_{\text{ini}}+1\leq t \leq T\},
\end{aligned} 
\end{aligned}
\]
which indicates the equivalence of the two feasible sets. 
This completes the proof.
\end{proof}

\section{Proof of Theorem \ref{theorem 2}}
\label{Proof of Theorem 2}

\begin{proof}
 {\bf (a) Part (I): An FNE $\Rightarrow$ A Solution. } First, we prove that for every FNE $\{u^{i*}_t=K^{i*}_t  U_{t-1} + L^{i*}_t\mid t\in \mathbb T, i\in \mathbb N\}$, there exists a set $\{P^{i*}_t, S^{i*}_t, w^{i*}_t| t\in \mathbb T, i\in \mathbb N\}$  such that $\{K^{i*}_t, L^{i*}_t,P^{i*}_t, S^{i*}_t, w^{i*}_t| t\in \mathbb T, i\in \mathbb N\}$  is a solution to equations (\ref{eq1-1a}, \ref{eq1-1b}, \ref{eq1-2} - \ref{eq1-5}).
We begin by setting the terminal conditions: $P_{T+1}^{i*}=0, ~S_{T+1}^{i*}=0, ~w_{T+1}^{i*}=0.$  
Since $K^{i*}_T$ and $L^{i*}_T$ are given, we can directly compute $P_T^{i*}, S_T^{i*}, w_T^{i*}$ using equations (\ref{eq1-3} - \ref{eq1-5}). This procedure can be repeated backward to define $\{P_{t}^{i*}, ~S_{t}^{i*}, ~w_{t}^{i*}| t\in \mathbb T\}$. By construction, $\{P^{i*}_t, S^{i*}_t, w^{i*}_t, K^{i*}_t, L^{i*}_t| t\in \mathbb T, i\in \mathbb N\}$ satisfies equations (\ref{eq1-3} - \ref{eq1-5}). Therefore, it remains to verify that it also satisfies equations (\ref{eq1-1a}, \ref{eq1-1b}, \ref{eq1-2}).

At stage $T$, let equations (\ref{eq1-3}), (\ref{eq1-4}), and (\ref{eq1-5}) represent the quadratic, linear, and constant terms with respect to $U_{T-1}$, respectively. Specifically, multiply (\ref{eq1-3}) on the left by $0.5 (\delta_i)^{T-1} U_{T-1}^{\top}$ and on the right by $U_{T-1}$, multiply (\ref{eq1-4}) on the left by $(\delta_i)^{T-1} U_{T-1}^{\top}$, and multiply (\ref{eq1-5}) by $(\delta_i)^{T-1}$. 
Then, summing the three resulting expressions yields:
\begin{equation}
\begin{aligned}
&V^{i*}_T(U_{T-1})\\&\triangleq (\delta_i)^{T-1}\left(\frac{1}{2}U^{\top}_{T-1} P^{i*}_{T} U_{T-1} +   U_{T-1}^{\top}(S^{i*}_T) + w^{i*}_T\right)\\
    &= \frac{1}{2} (\delta_i)^{T-1} \left(G_T(\mathbf{K}_T^{*} U_{T-1} + \mathbf{L}_T^*) - l^i_T\right)^{\top} Q^i \left(G_T(\mathbf{K}_T^{*} U_{T-1} + \mathbf{L}_T^*) - l^i_T\right) \\
    &+\sum\limits_{j=1}^N \frac{1}{2}  (\delta_i)^{T-1} \left(K^{j*}_T  U_{T-1} + L^{j*}_T\right)^{\top} R^{ij} \left(K^{j*}_T  U_{T-1} + L^{j*}_T\right) \\
    &+ \frac{1}{2} (\delta_i)^{T}(\mathbf{K}_T^{*} U_{T-1} + \mathbf{L}_T^*)^{\top} P^{i*}_{T+1} (\mathbf{K}_T^{*} U_{T-1} + \mathbf{L}_T^*) \\
    &+ (\delta_i)^{T}  (\mathbf{K}_T^{*} U_{T-1} + \mathbf{L}_T^*)^{\top} (S^{i*}_{T+1}) + (\delta_i)^{T} w^{i*}_{T+1},
    \end{aligned}
\label{proof eq2-1}
\end{equation}
for any $U_{T-1}\in \mathcal{B}_{T_{\text{ini}}} \times R^{m(T-1)}$, where $\mathbf{K}^*_T = \begin{bmatrix}
I \\
K^*_T
\end{bmatrix},  \mathbf{L}_T^* = \begin{bmatrix}
0 \\
L^*_T
\end{bmatrix},$  $K_T^*=\text{col}(K_T^{1*},...,$ $K_T^{N*}),$ $ L_T= \text{col}(L_T^{1*},...,L_T^{N*}),$ and $I$ is the identity matrix. To simplify this equation, we need the  equations 
\begin{equation*}
    \begin{aligned}
    U^*_{T} = 
    \begin{bmatrix}
    U_{T-1} \\
    u^{1*}_T\\
    \vdots \\
    u^{N*}_T
    \end{bmatrix} = 
    \begin{bmatrix}
    I \\
    K^{1*}_T \\
    \vdots \\
    K^{N*}_T
    \end{bmatrix}U_{T-1} +
    \begin{bmatrix}
    0 \\
    L^{1*}_T \\
    \vdots \\
    L^{N*}_T 
    \end{bmatrix} =
    \begin{bmatrix}
    I \\
    K^{*}_T
    \end{bmatrix}U_{T-1} +
    \begin{bmatrix}
    0 \\
    L^*_T
    \end{bmatrix}=
    \mathbf{K}_T^* U_{T-1} + \mathbf{L}_t^*
    \end{aligned}, 
\end{equation*} 
and 
\begin{equation*}
    \begin{aligned}
    \begin{bmatrix}
    U_p \\
    Y_P \\
    U_f \\
    Y_f
    \end{bmatrix}g = 
    \begin{bmatrix}
    u_{\text{ini}} \\
    y_{\text{ini}} \\
    u\\
    y
    \end{bmatrix} \Rightarrow 
    y_T = Y_{fT}g = Y_{fT}
    \begin{bmatrix}
    U_p \\
    Y_P \\
    U_f 
\end{bmatrix}^{\dag} 
    \begin{bmatrix}
    u_{\text{ini}} \\
    y_{\text{ini}} \\
    u
    \end{bmatrix} =
    Y_{fT}M^{\dag}_T U_T = G_T U_T
    \end{aligned}.
\end{equation*}
The latter equation holds when $u^{i*}_T=K^{i*}_T x_t + L^{i*}_T$ for $i \in \mathbb{N}$, then (\ref{proof eq2-1}) is equal to 
\begin{equation}
    \begin{aligned}
    &V^{i*}_T(U_{T-1})\\
    &=\frac{(\delta_i)^{T-1}}{2}[(y_T(u^*_T)-l^i_T)^{\top} Q^i (y_T(u^*_T)-l^i_T) + \sum\limits_{j \in \mathbb N}(u^{j*}_T)^{\top} R^{ij}u^{j*}_T]+V_{T+1}^{i*}(U_T^*) \\&=\min\limits_{u^i_T} \frac{(\delta_i)^{T-1}}{2}(y_T(u^i_t,u^{(-i)*}_T)-l^i_T)^{\top} Q^i (y_T(u^i_t,u^{(-i)*}_T)-l^i_T) +V_{T+1}^{i*}(U_T^*) \\&+ \frac{(\delta_i)^{T-1}}{2}[(u^{i}_T)^{\top} R^{ii}u^{i}_T + \sum\limits_{j \neq i}(u^{j*}_T)^{\top} R^{ij}u^{j*}_T]  
    \\
    &\triangleq \min\limits_{u^i_T} H^{i*}_T(u^i_T), i\in \mathbb N,  
    \end{aligned}
\label{proof eq2-2}
\end{equation} 
 where $y_T(u_T^*)$ denotes the output when the inputs are $U_{t-1}, u^*_T$, and $y_T(u^i_t, u^{(-i)*}_T)$ is defined similarly.  The second equality in (\ref{proof eq2-2}) follows from the definition of an FNE, namely, that $u^{i*}_T$ is the best response to $u^{j*}_T$ for all $j \neq i$, which also implies that 
\begin{equation}
    \begin{aligned}
    &\frac{\partial H^{i*}_T}{\partial u^i_T} (u^{i*}_T)\\
    &= (\delta_i)^{T-1}[(G_T)_{:,u^i_T}]^{\top} Q^i (G_T (\mathbf{K}_T^{*} U_{T-1} + \mathbf{L}_t^*) - l^i_T) \\&+ (\delta_i)^{T-1} R^{ii} (K_T^{i*} U_{T-1} + L_T^{i*}) + (\delta_i)^{T}(P^{i*}_{T+1})_{u^i_T,:} (\mathbf{K}_T^{*} U_{T-1} + \mathbf{L}_t^*) \\&+ (\delta_i)^{T}(S^{i*}_{T+1})_{u^i_T,:}  = 0
    \end{aligned}
\label{proof eq2-3}
\end{equation} 
for any $U_{T-1} \in \mathcal{B}_{T_{\text{ini}}} \times \mathbb{R}^{m(T-1)}$, and we have 
$$\mathbf{K}_T^{*} U_{T-1} + \mathbf{L}_t^* = (\mathbf{K}_T^{*})_{:,(u_{\text{ini}},y_{\text{ini}})}  \text{col}(u_{\text{ini}},y_{\text{ini}}) + \sum\limits_{j=1}^{T-1} (\mathbf{K}_T^{*})_{:,u_j} u_j + \mathbf{L}_t^*.$$ 
By comparing the coefficients of $\mathrm{col}(u_{\text{ini}}, y_{\text{ini}})$, each $u_j$ with $j \leq T-1$, and the constant terms on both sides of (\ref{proof eq2-3}), we obtain equations (\ref{eq1-1a}), (\ref{eq1-1b}), and (\ref{eq1-2}) at stage $T$. At stage $T-1$, similarly, multiply equation (\ref{eq1-3}) on the left by $0.5(\delta_i)^{T-2} U_{T-2}^{\top}$ and on the right by $U_{T-2}$, multiply equation (\ref{eq1-4}) on the left by $(\delta_i)^{T-2} U_{T-2}^{\top}$, and multiply equation (\ref{eq1-5}) by $(\delta_i)^{T-2}$. Then, summing these three terms yields 
    \begin{equation}
    \begin{aligned}
    \label{proof eq2-7}
    &V^{i*}_{T-1}(U_{T-2})\\ & \triangleq(\delta_i)^{T-2}(\frac{1}{2}U^{\top}_{T-2} P^{i*}_{T-1} U_{T-2} +  U_{T-2}^{\top}S^{i*}_{T-1} + w^{i*}_{T-1})+V_{T}^{i*}(U^*_{T-1})\\
    &= \frac{1}{2} (\delta_i)^{T-2} (G_{T-1}U_{T-1}^* - l^i_{T-1})^{\top} Q^i (G_{T-1}U_{T-1}^* - l^i_{T-1}) \\
    &+\sum\limits_{j=1}^N \frac{1}{2}  (\delta_i)^{T-2} (K^{j*}_{T-1} U_{T-2} + L^{j*}_{T-1})^{\top} R^{ij} (K^{j*}_{T-1} U_{T-2} + L^{j*}_{T-1}) \\
    &+ \frac{1}{2} (\delta_i)^{T-1}(U_{T-1}^*)^{\top} P^{i*}_{T} U_{T-1}^* + (\delta_i)^{T-1}  (U_{T-1}^*)^{\top}S^{i*}_{T} + (\delta_i)^{T-1} w^{i*}_{T},
    \end{aligned}
    \end{equation}
for any $U_{T-2}\in \mathcal{B}_{T_{\text{ini}}} \times R^{m(T-2)}$, where $U_{T-1}^* = \mathbf{K}_{T-1}^{*} U_{T-2} + \mathbf{L}_{T-1}^*$, $\mathbf{K}^*_{T-1} = \begin{bmatrix}
I \\
K^*_{T-1}
\end{bmatrix}, \mathbf{L}_{T-1}^* = \begin{bmatrix}
0 \\
L^*_{T-1}
\end{bmatrix},$ $K_{T-1}^*= \text{col}(K_{T-1}^{1*},...,K_{T-1}^{N*}),$ $L_{T-1}^*= \text{col}(L_{T-1}^{1*},...,L_{T-1}^{N*})$ and $I$ is the  identity matrix. 
\begin{equation}
    \begin{aligned}
    &V^{i*}_{T-1}(U_{T-2})\\&=\min\limits_{u^i_{T-1}} \frac{(\delta_i)^{T-2}}{2}(y_{T-1}-l^i_{T-1})^{\top} Q^i (y_{T-1}-l^i_{T-1})+V_{T}^{i*}(U_{T-1}^*) \\&+ \frac{(\delta_i)^{T-2}}{2}[(u^{i}_{T-1})^{\top} R^{ii}u^{i}_{T-1} + \sum\limits_{j \neq i}(u^{j*}_{T-1})^{\top} R^{ij}u^{j*}_{T-1}]  \\&\triangleq \min\limits_{u^i_{T-1}} H^{i*}_{T-1}(u^i_{T-1}), i\in \mathbb N. 
    \end{aligned}
\label{proof eq2-8}
\end{equation}
\vspace{-0.5 cm}
\begin{equation}
    \begin{aligned}
    &\frac{\partial H^{i*}_{T-1}}{\partial u^i_{T-1}} (u^{i*}_{T-1}) \\&= (\delta_i)^{T-2} R^{ii} (K_{T-1}^{i*} U_{T-2} + L_{T-1}^{i*})\\
    &+(\delta_i)^{T-2}[(G_{T-1})_{:,u^i_{T-1}}]^{\top} Q^i (G_{T-1} (\mathbf{K}_{T-1}^{*} U_{T-2} + \mathbf{L}_{T-1}^*) - l^i_{T-1})  \\
    &+ (\delta_i)^{T-1}(P^{i*}_{T})_{u^i_{T-1},:} (\mathbf{K}_{T-1}^{*} U_{T-2} + \mathbf{L}_{T-1}^*) + (\delta_i)^{T-1}(S^{i*}_{T})_{u^i_{T-1},:}  = 0
    \end{aligned}
\label{proof eq2-9}
\end{equation}

Similarly, using $U_{T-1} = \mathbf{K}_{T-1} U_{T-2} + \mathbf{L}_{T-1}^*$ and $y_{T-1} = G_{T-1} U_{T-1}$, and noting that $u^i_{T-1}$ is the best response to $u^{j*}_{T-1}$ for all $j \neq i$, it follows that equation (\ref{proof eq2-7}) equals (\ref{proof eq2-8}), which also leads to equation (\ref{proof eq2-9}), holding for any $U_{T-2} \in \mathcal{B}_{T_{\text{ini}}} \times \mathbb{R}^{m(T-2)}$.
By comparing the coefficients of $\mathrm{col}(u_{\text{ini}}, y_{\text{ini}})$, $u_j$ for $j \leq T-2$, and the constant terms on both sides of the equation, we obtain equations (\ref{eq1-1a}), (\ref{eq1-1b}), and (\ref{eq1-2}) at stage $T-1$. 
It then follows from dynamic programming that, by repeating the process above from stage $T-2$ down to stage $1$, we obtain equations (\ref{eq1-1a}), (\ref{eq1-1b}), and (\ref{eq1-2}) for all $t \in \mathbb{T}$ and $i \in \mathbb{N}$. This completes the proof of part (a).

{ \bf (b) Part (I): A Solution $\Rightarrow$ An FNE.} Second, we prove that for any solution $\{K^{i*}_t, L^{i*}_t,$  $P^{i*}_t,S^{i*}_t, w^{i*}_t| t\in \mathbb T, i\in \mathbb N\}$ to equations (\ref{eq1-1a}), (\ref{eq1-1b}), and (\ref{eq1-2} - \ref{eq1-5}), the strategy profile $\{u^{i*}_t=K^{i*}_t  U_{t-1} + L^{i*}_t\mid t\in \mathbb T, i\in \mathbb N\}$ constitutes an FNE of the unknown-dynamics game (\ref{isosystem}, \ref{costfun}). 
Since equations (\ref{eq1-3} - \ref{eq1-5}) hold at stage $T$, let (\ref{eq1-3}), (\ref{eq1-4}), and (\ref{eq1-5}) represent the quadratic, linear, and constant terms with respect to $U_{T-1}$, respectively, and sum them.
We then obtain equation (\ref{proof eq2-1}), which equals the first two rows of (\ref{proof eq2-2}).
Furthermore, since equations (\ref{eq1-1a}), (\ref{eq1-1b}), and (\ref{eq1-2}) hold at stage $T$, summing them similarly yields (\ref{proof eq2-3}). This implies that $K^{i*}_T$ and $L^{i*}_T$ are the zeros of the first derivative of $H_T^{i*}(u^i_T)$ in (\ref{proof eq2-2}) with respect to $u^i_T$. 
Since $Q^i$ is positive semidefinite and $R^{ii}$ is positive definite, the function $H_T^{i*}(u^i_T)$, representing the objective function of player $i$ in (\ref{proof eq2-2}), is strictly convex in $u^i_T$. Thus, player $i$'s best response exists, is unique, and coincides with the unique root of the derivative of $H_T^{i*}(u^i_T)$. 
This shows that $\{u^{i*}_T=K^{i*}_T  U_{T-1} + L^{i*}_T\mid i \in \mathbb N\}$ is an FNE of the game at stage $T$. 

Since (\ref{eq1-3} – \ref{eq1-5}) hold at stage $T-1$, we similarly obtain (\ref{proof eq2-7}). Since (\ref{eq1-1a}), (\ref{eq1-1b}), and (\ref{eq1-2}) are satisfied at stage $T-1$, summing them yields (\ref{proof eq2-9}), which implies that $K^{i*}_{T-1}$ and $L^{i*}_{T-1}$ are the zeros of the first derivative of $H_{T-1}^{i*}(u^i_{T-1})$ with respect to $u^i_{T-1}$. The function $H_{T-1}^{i*}(u^i_{T-1})$ in (\ref{proof eq2-8}) is strictly convex in $u^i_{T-1}$. Thus, player $i$'s best response is unique and equals the unique root of the derivative of $H_{T-1}^{i*}(u^i_{T-1})$. This shows that the strategies $u^{i*}_{T-1} = K^{i*}_{T-1} U_{T-2} + L^{i*}_{T-1}, u^{i*}_T = K^{i*}_T U_{T-1} + L^{i*}_T $ for $ i \in \mathbb{N}$ form an FNE of the game with unknown dynamics at stages $T-1$ and $T$. It then follows from dynamic programming that, by repeating this process from stage $T-2$ down to $1$, we complete the proof of (b). By combining the proofs of parts (a) and (b), we complete the proof of (I).

{\bf (c) Part of (II).} \textbf{Sufficiency.} 
 Assume that there exists at least one solution of (\ref{eq1-1a}), (\ref{eq1-1b}), and (\ref{eq1-2} – \ref{eq1-5}), and that all solutions satisfy the three conditions in part (II) of the theorem. Due to (b), we have at least one FNE. The three conditions imply that $(K^{i*}_t)_{:,u_j}$ and $L^{i*}_t$ are unique for $j=1,\ldots,t-1$, $i \in \mathbb{N}$, $t \in \mathbb{T}$, and for any two different solutions $\{P^{i*}_t, S^{i*}_t, w^{i*}_t, K^{i*}_t, L^{i*}_t \mid t \in \mathbb{T}, i \in \mathbb{N}\}$ and $\{\hat{P}^{i*}_t, \hat{S}^{i*}_t, \hat{w}^{i*}_t, \hat{K}^{i*}_t, \hat{L}^{i*}_t \mid t \in \mathbb{T}, i \in \mathbb{N}\}$, we have $(K^{i*}_t)_{:,(u_{\text{ini}}, y_{\text{ini}})} \mathbf{B}(\mathcal{B}_{T_{\text{ini}}}) = (\hat{K}^{i*}t)_{:,(u_{\text{ini}}, y_{\text{ini}})} \mathbf{B}(\mathcal{B}_{T_{\text{ini}}})$, which leads to  
\begin{equation}
K^{i*}_t U_{t-1} + L^{i*}_t = \hat K^{i*}_t U_{t-1} + \hat L^{i*}_t, \forall U_{t-1} \in \mathcal{B}_{ini} \times R^{m(t-1)}, t\in \mathbb T, 
\label{eq.equiv}
\end{equation}
 the equation indicates that all FNEs are the same in terms of the inputs $u^i_t$. 
\textbf{Necessity.} 
Assume that there is a unique FNE for the game. Due to part (a), there exists at least one solution of (\ref{eq1-1a}), (\ref{eq1-1b}), and (\ref{eq1-2} – \ref{eq1-5}). Since the FNE is unique, if there are multiple solutions, they must generate the same FNE by the method in part (b). Thus, for any two different solutions of (\ref{eq1-1a}), (\ref{eq1-1b}), and (\ref{eq1-2} – \ref{eq1-5}), namely $\{P_t^{i*}, S_t^{i*}, w_t^{i*}, K_t^{i*}, L_t^{i*} \mid t \in \mathbb{T}, i \in \mathbb{N}\}$ and $\{\hat{P}_t^{i*}, \hat{S}_t^{i*}, \hat{w}_t^{i*}, \hat{K}_t^{i*}, \hat{L}_t^{i*} \mid t \in \mathbb{T}, i \in \mathbb{N}\}$, we have (\ref{eq.equiv}). This directly shows that the three conditions must be satisfied.
\end{proof}

\section{Proof of Theorem \ref{theorem 3}}
\label{Proof of Theorem 3}

\begin{proof}
We first prove (\ref{eq1-1} - \ref{eq1-5}) has a unique solution. At stage $T$, we begin with $P^{i*}_{T+1}, S^{i*}_{T+1}, w^{i*}_{T+1}=0$.  (\ref{eq1-1}) is equivalent to 
$\tilde{H}_T(P_{T+1}^{*}) K_{T}^{*} = \tilde{g}_T(P_{T+1}^{*})$. Since  $|\tilde{H}_T(P_{T+1}^{*})| \neq 0$, we have 
$K_{T}^{*} = \tilde{H}_T(P_{T+1}^{*})^{-1} \tilde{g}_T(P_{T+1}^{*}).$
(\ref{eq1-2}) is equivalent to 
$\tilde{H}_T(P_{T+1}^{*}) L_{T}^{*} = \tilde{g}'_T(S_{T+1}^{*})$, we similarly obtain  $L_{T}^{*} = \tilde{H}_T(P_{T+1}^{*})^{-1} \tilde{g}'_T(S_{T+1}^{*}).$
Then we can directly obtain $P^{i*}_T, S^{i*}_T, w^{i*}_T$ from $K^*_{T}, L_{T}^{*}, P^{i*}_{T+1}, S^{i*}_{T+1}, w^{i*}_{T+1}$ by (\ref{eq1-3} - \ref{eq1-5}). 
At stage $T-1$, for given $P^{i*}_T, S^{i*}_T, w^{i*}_T$, and $|\tilde{H}_{T-1}(P_{T}^{*})| \neq 0$, we similarly obtain $K_{T-1}^{*} = \tilde{H}_{T-1}(P_{T}^{*})^{-1} \tilde{g}_{T-1}(P_{T}^{*}),$ 
$L_{T-1}^{*} = \tilde{H}_{T-1}(P_{T}^{*})^{-1} \tilde{g}'_{T-1}(S_{T}^{*}).$
By repeating the process at stages $T-2,\ldots,2,1$, we obtain the unique solution of (\ref{eq1-1} - \ref{eq1-5}). 
Due to (\ref{eq1-1}) and $\mathbf{K}_t=[ \mathbf{K}_{t_{:,(u_{\text{ini}},y_{\text{ini}})}},\mathbf{K}_{t_{:,u_{1}}},...,\mathbf{K}_{t_{:,u_{t-1}}}]$, we have
\begin{align*}
&[(G_t)_{:,u^i_t}]^{\top} Q^i G_t  \mathbf{K}_{t_{:,(u_{\text{ini}},y_{\text{ini}})}} + R^{ii}  K_{t_{:,(u_{\text{ini}},y_{\text{ini}})}} 
+ \delta_i(P^i_{t+1})_{u^i_t,:}\mathbf{K}_{t_{:,(u_{\text{ini}},y_{\text{ini}})}} = 0,\\
&[(G_t)_{:,u^i_t}]^{\top} Q^i G_t  \mathbf{K}_{t_{:,u_{j}}} + R^{ii} K_{t_{:,u_{j}}} 
+\delta_i(P^i_{t+1})_{u^i_t,:}\mathbf{K}_{t_{:,u_{j}}} = 0, ~j=1,...,t-1, 
\end{align*}
and the first equation multiply by $\textbf B (\mathcal{B}_{T_{\text{ini}}})$ is (\ref{eq1-1a}), the second equation is (\ref{eq1-1b}). Thus, (\ref{eq1-1a}, \ref{eq1-1b}, \ref{eq1-2} - \ref{eq1-5}) has at least one solution. 
By part (I)(b) in Theorem \ref{theorem 2}, the game has at least one FNE.
\end{proof}

\section{Proof of Theorem \ref{theorem 4}}

\label{Proof of Theorem 4}

\begin{proof}
Since Assumption 1 in \cite{2025finitestrategy} holds, it follows from Proposition 1 in \cite{2025finitestrategy} that the $T$-stage known-dynamics game has a unique FNE, for $T=1,2,...$ Here we denote the $T$-stage known-dynamics game's unique FNE by $u^{i*}_t(T)(x_t) = \overline K^{i*}_t(T)x_t, i\in \mathbb N, t=1,..., T$, which is different with the notation in \cite{2025finitestrategy}. 
It follows from Theorem \ref{theorem 1} and Remark \ref{remark} that the FNE $u^{i*}_t(T)(x_t) = K^{i*}_t(T)U_{t-1}$ of the $T$-stage unknown-dynamics game satisfies
$$\overline K^{i*}_t(T) A^{t-k-1}B = (K^{i*}_t(T))_{:,u_{k}}, k=1,...,t-1, i \in \mathbb N, t \in \mathbb T.$$ 
Since Lemma 2 in \cite{2025finitestrategy} holds, we have $\lim\limits_{T\rightarrow +\infty} \overline K^{i*}_t(T)$ $= \overline K^{i*}, i\in \mathbb N$, for any given $t\in N_+$, and $u^{i*}_t(x_t) = \overline K^{i*} x_t$ forms the FNE associated with the limiting matrices in the infinite-horizon known-dynamics game. This leads to $\lim\limits_{T\rightarrow +\infty} (K^{i*}_t(T))_{:,u_{k}} = \lim\limits_{T\rightarrow +\infty} \overline K^{i*}_t(T) A^{t-k-1}B = \overline K^{i*} A^{t-k-1}B,k=1,...,t-1.$

 Since Theorem 1 in \cite{2025finitestrategy} holds, we have $\lim\limits_{T_m \to +\infty} \tilde{J}^i(x_1) = J^i$ for all $i \in \mathbb{N}$, where $T_m = \min(T_1, \ldots, T_N)$. Here, $\tilde{J}^i$ denotes the total cost of player $i$ in the infinite-horizon known-dynamics game when using the finite-horizon strategy of ``watching $T_i$ steps into the future and moving one step now,'' i.e., $u^i_t(x_t) = \overline{K}^{i*}_1(T_i) x_t$ for $t=1,2,\ldots$ and $i \in \mathbb{N}$. Meanwhile, $J^i(x_1)$ denotes the total cost of player $i$ under the FNE of the infinite-horizon known-dynamics game.
By Theorem \ref{theorem 1} and and Remark \ref{remark}, we have $\overline{K}^{i*}_1(T) x_1 = K^{i*}_1(T) \mathrm{col}(u_{\mathrm{ini}}, y_{\mathrm{ini}})$ for all $T=1,2,\ldots$ when $x_1$ is matched with $(u_{\mathrm{ini}}, y_{\mathrm{ini}})$. In the infinite-horizon unknown-dynamics game, if each player $i$ adopts the same finite-horizon strategy $u^i_t = K^{i*}_1(T_i) U_0(t)$ for $i \in \mathbb{N}$ and $t=1,2,\ldots$, then player $i$’s total cost is $\tilde{J}^i(u_{\mathrm{ini}}, y_{\mathrm{ini}})$. Since $x_1$ corresponds to $(u_{\mathrm{ini}}, y_{\mathrm{ini}})$, we have $u^{i*}_t = \overline{K}^{i*}_1(T_i) x_1 = K^{i*}_1(T_i) U_0(1)$ for all $i \in \mathbb{N}$, and thus $x_2$ is still matched with  $U_0(2)$ due to identical inputs at stage 1. Similarly, for all $t \ge 1$, $\overline{K}^{i*}_1(T_i) x_t = K^{i*}_1(T_i) U_0(t)$ holds and $x_t$ is matched with  $U_0(t)$. Therefore, $\tilde{J}^i(u_{\mathrm{ini}}, y_{\mathrm{ini}}) = \tilde{J}^i(x_1)$ for all $i \in \mathbb{N}$, and we have $\lim\limits_{T_m \to +\infty} \tilde{J}^i(u_{\mathrm{ini}}, y_{\mathrm{ini}}) = \lim\limits_{T_m \to +\infty} \tilde{J}^i(x_1) = J^i$.

As shown in the proof above, for $i \in \mathbb N$, we have $\lim\limits_{T\rightarrow +\infty} \overline K^{i*}_1(T)$ $= \overline K^{i*}, i\in \mathbb N$. Thus, for any matched pair $(U_0(t), x_t)$, we have
$$ u^i_t=\overline K^{i*} x_t=\lim\limits_{T\rightarrow +\infty} \overline K^{i*}_1(T)x_t=\lim\limits_{T\rightarrow +\infty} K^{i*}_1(T)U_0(t)=K^{i*}U_0(t).$$ 
Therefore, the strategy profile $\{ u^{i*}_t=K^{i*}U_0(t)\mid i\in \mathbb N, ~t=1,2,...\} $
    is an FNE of the infinite-horizon unknown-dynamics game since $u^{i*}_t(x_t) = \overline K^{i*} x_t$ forms an FNE of the infinite-horizon known-dynamics game.  
Due to Remark \ref{remark} and (\ref{strategy matrices equation2}), we have 
$\overline K^{i*}_1(T) x_1 =  K^{i*}_1(T)_{:,u_{\text{ini}}}u_{\text{ini}}+K^{i*}_1(T)_{:,y_{\text{ini}}}y_{\text{ini}},$
 where $x_1$ is matched with $u_{\text{ini}},y_{\text{ini}}$. Since system (\ref{isosystem}) is controllable, for $e_1 = \text{col}(1,0,\dots,0)\in \mathbb R^n$, ..., $e_n = \text{col}(0,\dots,0,1)\in \mathbb R^n$,  there exist $u_{\text{ini}}(k),y_{\text{ini}}(k) \in \mathcal{B}_{T_{\text{ini}}}$ (with a permutation of the element in $\mathcal{B}_{T_{\text{ini}}}$) is matched with $x_1=e_k, k=1,2,...,n$. We have 
$$\overline K^{i*}_1(T) = \overline K^{i*}_1(T) [e_1,...,e_n]= K^{i*}_1(T) \begin{bmatrix}u_{\text{ini}}(1) & \dots & u_{\text{ini}}(n)\\
y_{\text{ini}}(1) & \dots & y_{\text{ini}}(n)\end{bmatrix}.$$ 

Let $\begin{bmatrix}u_{\text{ini}}(1) & \dots & u_{\text{ini}}(n)\\
y_{\text{ini}}(1) & \dots & y_{\text{ini}}(n)\end{bmatrix} = Z$, since $\lim\limits_{T \rightarrow +\infty} K^{i*}_1(T) = K^{i*}$, we have $ \overline K^{i*}=\lim\limits_{T \rightarrow +\infty} \overline K^{i*}_1(T)=K^{i*}Z , i \in \mathbb N.$ 
Let $\epsilon=\max\limits_{i\in \mathbb N} \| \overline K^{i*}_1(T_i) - \overline K^{i*} \|_2 $, then we have $\lim\limits_{T \rightarrow +\infty} \epsilon = 0$ by Lemma 2 in \cite{2025finitestrategy}. Since  $\|A+\sum\limits_{j=1}^N B^j \overline K^{j*}\|_2  < 1$ by Assumption 1.(iii) in \cite{2025finitestrategy}, there exists an integer $\tilde T \in N_+$ such that when $T_i > \tilde T$ for all $i \in \mathbb{N}$, $\|A+\sum\limits_{j=1}^N B^j \overline K^{j*}\|_2 + (\sum\limits_{j=1}^N \|B^j\|_2) \epsilon < 1$ and $\theta_{i1}  + \theta_{i2} \epsilon + \theta_{i3} \epsilon^2 < 2\theta_{i1} $. It follows from Theorem 1 in \cite{2025finitestrategy} and $\tilde{J}^i(x_1)=\tilde{J}^i(u_{\text{ini}},y_{\text{ini}})$ 
that 
$$
|\tilde{J}^i(u_{\text{ini}},y_{\text{ini}})- J^i| \leq \frac{1}{2} \|x_1\|_2^2  (\theta_{i1}  + \theta_{i2} \epsilon + \theta_{i3} \epsilon^2) \frac{\epsilon}{1-\delta_i}\leq  \|x_1\|_2^2  \theta_{i1}   \frac{\epsilon}{1-\delta_i}.$$ 
Since $
\epsilon =\max\limits_{i\in \mathbb N} \| \overline K^{i*}_1(T_i) - \overline K^{i*} \|_2 =\max\limits_{i\in \mathbb N} \|K^{i*}_1(T_i)Z-K^{i*}Z\|_2 \leq \max\limits_{i\in \mathbb N} \|K^{i*}_1(T_i)-K^{i*}\|_2 \|Z\|_2$, 
we have 
\begin{equation*}
\begin{aligned}
|\tilde{J}^i(u_{\text{ini}},y_{\text{ini}})- J^i| \leq  \|x_1\|_2^2  \theta_{i1}  \frac{\epsilon}{1-\delta_i} \leq  \|x_1\|_2^2  \theta_{i1}  \frac{\|Z\|_2}{1-\delta_i} \max\limits_{i\in \mathbb N} \|K^{i*}_1(T_i)-K^{i*}\|_2.
\end{aligned}
\end{equation*}
Thus, when $T_m > \tilde T$, there exists a constant $\mathcal{C}=\|x_1\|_2^2  \theta_{i1}  \frac{\|Z\|_2}{1-\delta_i}$ such that the following inequality holds.  This completes the proof. 
\begin{equation*}
\begin{aligned}
|\tilde{J}^i(u_{\text{ini}},y_{\text{ini}})- J^i| \leq  \mathcal{C} \max\limits_{i\in \mathbb N} \|K^{i*}_1(T_i)-K^{i*}\|_2 .
\end{aligned}
\end{equation*} 
\end{proof}

\bibliographystyle{siamplain}
\bibliography{main}

\end{document}